\newcommand\CoS{T^*\mathbb{S}^{d-1}}
\newcommand\Point{(\omega,\eta)}
\newcommand\bel{>}
\newcommand\Tr{\mathrm{Tr}}
\newcommand\Vol{\mathrm{Vol}}
\begin{document}

\newtheorem{theoreme}{Theorem}
\newtheorem{definition}{Definition}
\newtheorem{lemme}{Lemma}
\newtheorem{remarque}{Remark}
\newtheorem{exemple}{Example}
\newtheorem{proposition}{Proposition}
\newtheorem{corolaire}{Corollary}
\newtheorem{hyp}{Hypothesis}
\newtheorem*{theo}{Theorem}

\title{Equidistribution of phase shifts in trapped scattering}
\author{Maxime Ingremeau}

\maketitle

\begin{abstract}
We consider semiclassical scattering for compactly supported
perturbations of the Laplacian and show
equidistribution of eigenvalues of the scattering matrix at
(classically) non-degenerate energy levels. The only requirement is
that sets of fixed points of certain natural scattering relations have
measure zero.
This extends the result of \cite{Equid}, where the authors proved the
equidistribution result under a similar assumption on fixed points but
with the condition that there is no trapping.
\end{abstract}

\section{Introduction}
Consider a Riemannian manifold $(X,g)$ which is Euclidean near infinity, in the sense that there exist compact sets $X_0\subset X$ and $K_0\subset \mathbb{R}^d$ such that $(X\backslash X_0,g)$ and $(\mathbb{R}^d\backslash K_0,g_{eucl})$ are isometric.

Let us consider an operator $P_h:= -h^2\Delta_g + V$, where $V\in C_c^\infty(X)$ has its support in $X_0$.
It is well-known (see for example \cite[\S 2]{Mel} or \cite[\S 3.7, \S 4.4]{Resonances}), that for any $\phi_{in}\in
C^\infty (\mathbb{S}^{d-1})$, there is a
unique solution to $\big{(}P_h-1\big{)}u=0$ satisfying, for all $x\in (X\backslash X_0)\cong (\mathbb{R}^d\backslash K_0)$:
$$u(x)=|x|^{-(d-1)/2}\big{(} e^{-i |x|/h} \phi_{in}(\omega) +
e^{i |x|/h} \phi_{out}(-\omega) \big{)} + O(|x|^{-(d+1)/2}).$$

We define the scattering matrix\footnote{which is not a matrix as soon as
$d>1$!} $S_{h}: C^\infty (\mathbb{S}^{d-1})
\longrightarrow C^\infty (\mathbb{S}^{d-1})$, which depends on $h$, by
$$S_{h}(\phi_{in}) := e^{i\pi (d-1)/2} \phi_{out}.$$
The factor $e^{i\pi(d-1)/2}$ is taken so that the scattering matrix is the identity operator when $(X,g)= (\mathbb{R}^d,g_{Eucl})$ and $V\equiv 0$.

For each $h\in(0,1]$, $S_h$ can be extended by density to a unitary operator acting on
$L^2(\mathbb{S}^{d-1})$. $S_{h}-Id$ is then a trace class operator.
 Therefore, $S_{h}$ admits a sequence of eigenvalues of modulus 1, which
converge to 1, and which we denote by $(e^{i\beta_{h,n}})_{n\in
\mathbb{N}}$.

 Our aim in this paper will be to study the behaviour of $(e^{i\beta_{h,n}})$ in the limit where $h\rightarrow 0$. To do this, we define a measure $\mu_{h}$ on $\mathbb{S}^1$ by
\[\langle \mu_{h},f \rangle := (2\pi h)^{d-1} \sum_{n\in
\mathbb{N}} f(e^{i\beta_{h,n}}),\]
for any continuous $f:\mathbb{S}^1 \longrightarrow \mathbb{C}$. This measure is not finite, but $\langle \mu_h,f\rangle$ is finite as soon as $1$ is not in the support of $f$.

Let us now state the assumptions we make on the manifold $X$ and on the potential $V$.
\paragraph{The scattering map}
We denote by $p(x,\xi)= |\xi|_g^2+V(x) : T^*X\longrightarrow \mathbb{R}$ the
classical Hamiltonian, which is the principal symbol of $P_h$. Let us write $\mathcal{E}$ for the energy layer of energy 1:
\begin{equation}\label{layer}
\mathcal{E}=\{(x,\xi)\in T^*X; ~ p(x,\xi)=1\}.
\end{equation}

We denote by $\Phi^t(\rho)$ the Hamiltonian flow for the Hamiltonian $p$.
The \emph{outgoing} and \emph{incoming} sets are defined as
$$\Gamma^\pm:=\{\rho\in \mathcal{E}, \text{ such that } \Phi^t(\rho) \text{ remains in a compact set for all } \mp t \geq 0 \}.$$
The trapped set is the compact set
$$K=\Gamma^+\cap \Gamma^-.$$

Since, away from $X_0$, the trajectories by $\Phi^t$ are just straight lines, we have that for any $\omega\in \mathbb{S}^{d-1}$, and $\eta\in \omega^\perp\subset \mathbb{R}^d$, there
exists a unique $\rho_{\omega,\eta}\in \mathcal{E}$ such that
\begin{equation}\label{defrho}
\pi_X\big{(}\Phi^t(\rho_{\omega,\eta})\big{)} = t\omega+\eta \text{   for   } t< - T_0,
\end{equation}
where $\pi_X:T^*X\rightarrow X$ denotes the projection on the base manifold $X$, and where $T_0$ is large enough, so that $K_0\subset B(0,T_0)$.
Here, $\omega$ is the \emph{incoming direction}, and $\eta$ is the \emph{impact parameter}. In the sequel, we will identify
\begin{equation*}
\{(\omega,\eta); ~\omega\in \mathbb{S}^{d-1}, \eta \in \omega^\perp\} \cong T^*\mathbb{S}^{d-1}.
\end{equation*} 

We define the \emph{interaction region} as
\begin{equation*}
\mathcal{I}:= \{(\omega,\eta)\in T^*\mathbb{S}^{d-1}; \exists t\in \mathbb{R} \text{ such that } \pi_X\big{(} \Phi^t(\rho_{\omega,\eta})\big{)}\in X_0\}.
\end{equation*}
By compactness of $X_0$, $\mathcal{I}$ is compact.

If $\rho_{\omega,\eta}\notin \Gamma^-$, then there exists $\omega'\in \mathbb{S}^{d-1}$, $\eta'\in (\omega')^\perp\subset \mathbb{R}^d$ and $t'\in \mathbb{R}$ such that for all $t\geq T_0$,
\begin{equation*}
\pi_X\big{(} \Phi^t(\rho_{\omega,\eta})\big{)} = \omega'(t-t')+\eta'.
\end{equation*}

The \emph{(classical) scattering map} is then defined as $\kappa(\omega,\eta)=(\omega',\eta')$, as represented on Figure \ref{relation}.

\begin{figure}
    \center
   \includegraphics[scale=0.6]{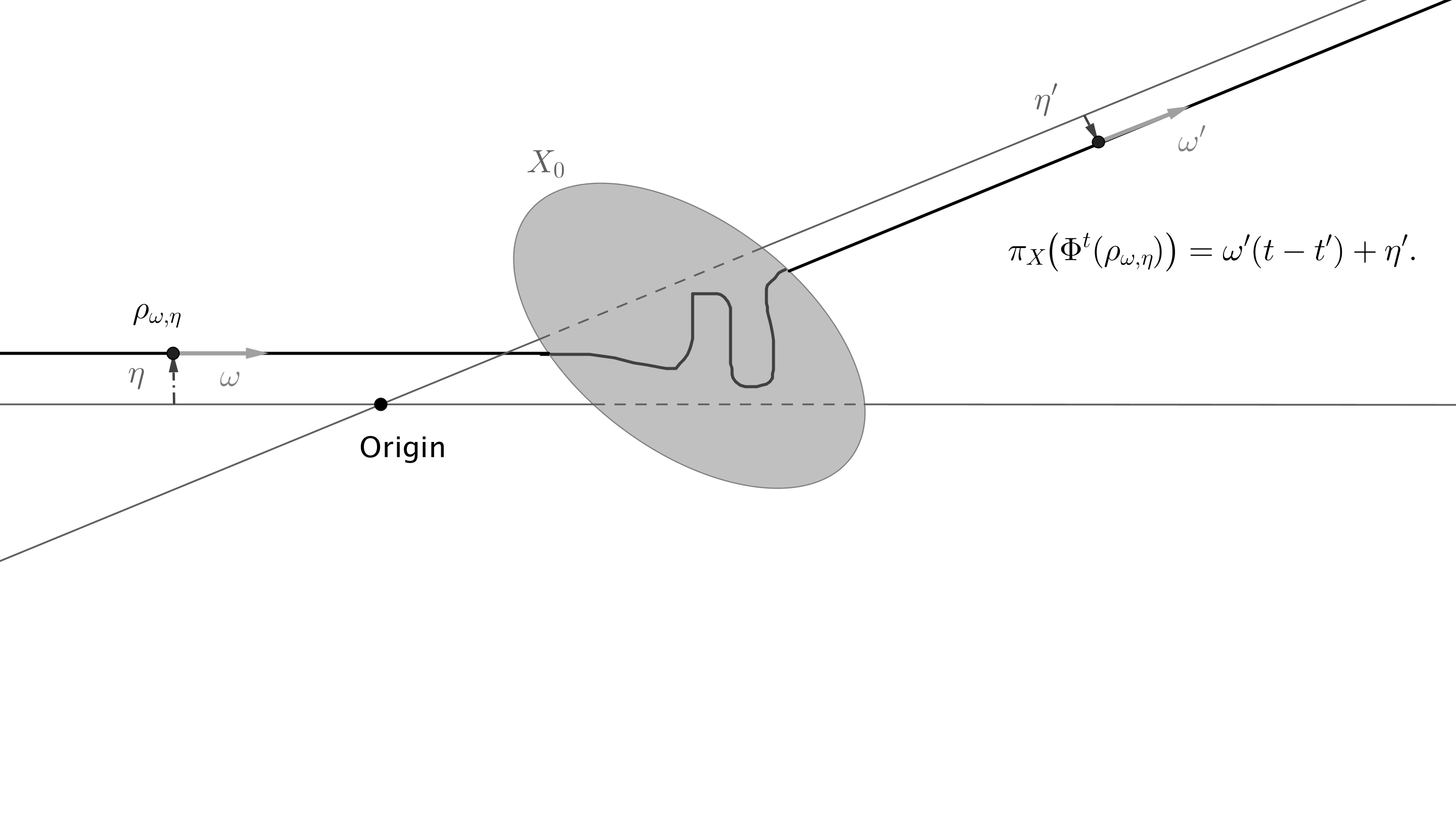}
    \caption{The scattering map $\kappa$.} \label{relation}
\end{figure}

\paragraph{The assumptions from \cite{Equid}}
The main assumption in \cite{Equid} is that
\begin{equation}\label{pakapte}
\Gamma^\pm=\emptyset.
\end{equation}

Under this assumption, $\kappa:T^*\mathbb{S}^{d-1}\rightarrow T^*\mathbb{S}^{d-1}$ is well defined. One can actually show that $\kappa$ is a symplectomorphism for the canonical symplectic structure on $T^*\mathbb{S}^{d-1}$, and in particular, it is invertible (see for example \cite{guillemin1977sojourn}). 

It is easy to see that
\begin{equation*}
\kappa(\mathcal{I})=\mathcal{I}, ~~\kappa(\mathcal{I}^c)=\mathcal{I}^c \text{  and  } \kappa|_{\mathcal{I}^c\rightarrow \mathcal{I}^c}=Id_{\mathcal{I}^c\rightarrow \mathcal{I}^c}.
\end{equation*}

The results in \cite{Equid} require a \emph{diversion hypothesis} which concerns
the periodic points of $\kappa$ in the interaction region.

For any $l\in \mathbb{Z}\backslash\{0\}$, denote the set of periodic points of $\kappa$ of period $l$ by
\begin{equation*}
\mathcal{F}_l:=\{(\omega,\eta)\in\mathcal{I};~ \kappa^l(\omega,\eta)=(\omega,\eta)\}.
\end{equation*}

The diversion hypothesis says that
\begin{equation}\label{divminko}
\forall l \in \mathbb{N}\backslash \{0\},~~\Vol\big{(}\mathcal{F}_l\big{)}=0,
\end{equation}
where $\Vol$ denotes the Liouville measure on $T^*\mathbb{S}^{d-1}$.

This hypothesis roughly says that most of the
classical trajectories in $\mathcal{E}$ which interact with the potential or the perturbation of the Euclidean metric are indeed
diverted. In \cite{Equid}, the authors work in the setting where $(X,g)\equiv (\mathbb{R}^d, g_{Eucl})$, and with $X_0=\mathrm{supp}(V)$, and they conjecture that this hypothesis holds for generic potentials.

The main result in \cite{Equid} is the following.

\begin{theo}[\cite{Equid}]
Suppose that the manifold $(X,g)$ and the potential $V$ are such that (\ref{pakapte}) and (\ref{divminko}) are satisfied.
Let $f:\mathbb{S}^1 \longrightarrow \mathbb{C}$ be a continuous function
such that $1\notin \mathrm{supp} f$. Then we have
\[ \lim \limits_{h\rightarrow 0} \langle \mu_{h},f \rangle = \frac{\Vol(\mathcal{I})}{2\pi}
\int_0^{2\pi} f(e^{i\theta}) d \theta. \]
\end{theo}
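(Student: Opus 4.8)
The plan is to pass to traces and reduce the statement to a model computation for powers of $S_h$. By the Borel functional calculus, since $1\notin\mathrm{supp}\,f$ the operator $f(S_h)$ is trace class, its nonzero spectrum consists of the numbers $f(e^{i\beta_{h,n}})$, and
$$\langle\mu_h,f\rangle=(2\pi h)^{d-1}\,\Tr f(S_h).$$
Two \emph{a priori} inputs are needed, both consequences of the standard microlocal description of the scattering matrix (and best recorded as a preliminary lemma): (i) for every $\varepsilon>0$, $\#\{n:\ |e^{i\beta_{h,n}}-1|\ge\varepsilon\}=O_\varepsilon(h^{-(d-1)})$; and (ii) $\|S_h-Id\|_{\Tr}=O(h^{-(d-1)})$. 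Using $|e^{i\beta}-1|\ge c|\beta|$ for $|\beta|\le\pi$, (ii) gives in particular $\sum_n|\beta_{h,n}|=O(h^{-(d-1)})$.

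The heart of the proof is the identity
\begin{equation}\label{keytrace}
\lim_{h\to0}(2\pi h)^{d-1}\,\Tr\big(S_h^k-Id\big)=-\Vol(\mathcal{I}),\qquad k\in\mathbb{Z}\setminus\{0\}.
\end{equation}
To establish it I would use that $S_h$ is, microlocally near the interaction region, a semiclassical Fourier integral operator quantizing $\kappa$, and that $S_h=Id+\negl$ microlocally on a neighbourhood of $\mathcal{I}^c$ (propagation of singularities: if $(\omega,\eta)\notin\mathcal{I}$ the trajectory never meets $X_0$, so the associated microlocal solution is the free one). Fix $\chi\in C_c^\infty(T^*\mathbb{S}^{d-1})$ equal to $1$ near $\overline{\mathcal{I}}$; then $S_h^k-Id=\chi^w(S_h^k-Id)\chi^w+\negl=\chi^wS_h^k\chi^w-(\chi^w)^2+\negl$, and the Weyl law gives $\Tr(\chi^w)^2=(2\pi h)^{-(d-1)}\int\chi^2+o(h^{-(d-1)})$, so it remains to compute $\Tr(\chi^wS_h^k\chi^w)$. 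Writing the Schwartz kernel of the FIO $\chi^wS_h^k\chi^w$ (quantizing $\kappa^k$) as an oscillatory integral and restricting to the diagonal, the phase is stationary exactly at fixed points of $\kappa^k$. Over $\mathrm{supp}\,\chi\setminus\overline{\mathcal{I}}$ one has $S_h^k=Id+\negl$, contributing $(2\pi h)^{-(d-1)}\big(\int\chi^2-\Vol(\mathcal{I})\big)+o(h^{-(d-1)})$ (using $\Vol(\partial\mathcal{I})=0$). Over $\overline{\mathcal{I}}$ the stationary set is $\mathcal{F}_k$; by the diversion hypothesis $\Vol(\mathcal{F}_k)=0$, so covering $\mathcal{F}_k$ by an open set $W$ of volume $<\delta$, nonstationary phase makes the contribution of $\overline{\mathcal{I}}\setminus W$ equal to $\negl$ while the crude bound on $W$ is $\lesssim(2\pi h)^{-(d-1)}\delta$; letting $\delta\to0$ shows this part is $o(h^{-(d-1)})$. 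Subtracting $\Tr(\chi^w)^2$ yields \eqref{keytrace}.

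It remains to deduce the theorem from \eqref{keytrace} by approximation. A continuous $f$ with $1\notin\mathrm{supp}\,f$ can be approximated uniformly, within $\delta$, by $f_\delta\in C^\infty$ still vanishing on a fixed neighbourhood of $1$ (mollify with small support); since $f-f_\delta$ vanishes near $1$, input (i) gives $(2\pi h)^{d-1}|\Tr(f-f_\delta)(S_h)|\le C\delta$, and $\int f_\delta\to\int f$, so it suffices to treat $f\in C^1(\mathbb{S}^1)$ with $f(1)=0$. Such an $f$ is approximated in $C^1$ norm, within $\epsilon$, by a Laurent polynomial $P_\epsilon$ with $P_\epsilon(1)=0$ (Fejér means, adjusted by a constant). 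Since $(f-P_\epsilon)(1)=0$, the Lipschitz bound and input (ii) give $(2\pi h)^{d-1}|\Tr(f-P_\epsilon)(S_h)|\le C\|f-P_\epsilon\|_{C^1}\to0$. Finally $P_\epsilon=\sum_{k\ne0}c_k(z^k-1)$, so by \eqref{keytrace}, $(2\pi h)^{d-1}\Tr P_\epsilon(S_h)\to-\Vol(\mathcal{I})\sum_{k\ne0}c_k$, which equals $\frac{\Vol(\mathcal{I})}{2\pi}\int_0^{2\pi}P_\epsilon(e^{i\theta})\,d\theta$ because $\int_0^{2\pi}(e^{ik\theta}-1)\,d\theta=-2\pi$ for $k\ne0$. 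Letting $\epsilon\to0$ and then $\delta\to0$ finishes the proof.

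The main obstacle is \eqref{keytrace}: one must have a sufficiently precise FIO description of $S_h$ near $\mathcal{I}$, including control of the ``boundary'' $\partial\mathcal{I}$ of the interaction region and the clean statement $S_h=Id+\negl$ off $\mathcal{I}$, and one must convert the bare measure-zero condition $\Vol(\mathcal{F}_k)=0$ (with no regularity of $\mathcal{F}_k$) into the bound $o(h^{-(d-1)})$ for the genuinely non-pseudodifferential part of $\Tr(\chi^wS_h^k\chi^w)$ via the covering/nonstationary-phase argument above. (In the trapped setting of the main body, the further difficulty is that $\kappa$, hence $S_h$, is only defined away from the trapped directions, so one must additionally show that cutting off away from trapping alters each trace by $o(h^{-(d-1)})$.)
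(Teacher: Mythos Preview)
Your argument is essentially correct and, for the trace identity \eqref{keytrace}, follows the same line as the paper's Proposition~\ref{chimp}: partition by a cut-off near $\mathcal{I}$, use Alexandrova's FIO description of $S_h$ to reduce the interacting contribution to a stationary-phase problem whose critical set is $\mathcal{F}_k$, cover $\mathcal{F}_k$ by a set of small volume and apply nonstationary phase on the complement, and use $S_h=Id+O(h^\infty)$ microlocally off $\mathcal{I}$ for the rest. The paper organizes the partition a bit differently (three pieces $\psi_\varepsilon^1,\psi_\varepsilon^2,\chi_\varepsilon^k$ and a separate Lemma~\ref{Out} to handle the high spherical-harmonic tail), but the mechanism is the same.

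The genuine difference is in the approximation step. You close by a two-stage density argument: first $C^0$-to-$C^1$ using the counting bound (i), then $C^1$-to-polynomials using the trace-norm bound (ii) $\|S_h-Id\|_{\Tr}=O(h^{-(d-1)})$ together with the Lipschitz estimate $|g(e^{i\beta})|\le\|g\|_{C^1}|e^{i\beta}-1|$ when $g(1)=0$; since $S_h-Id$ is normal this sum of $|e^{i\beta_{h,n}}-1|$ \emph{is} the trace norm. The paper instead proves a slightly stronger Theorem~\ref{theo2} by introducing the weighted space $C^0_\alpha(\mathbb{S}^1)$, establishing the counting Lemma~\ref{paranoid2} from the singular-value bound \eqref{lieneigensingular}, and deducing the continuity Lemma~\ref{chevre}. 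Your route is more elementary under the non-trapping assumption~\eqref{pakapte}, where (ii) is indeed a direct consequence of the FIO/microlocal description of $S_h-Id$. The paper's route, by contrast, is tailored to the trapping situation of Theorem~\ref{theo}: there $S_h$ is only an FIO away from the trapped directions $\tilde{\Gamma}^\pm$, so a clean bound $\|S_h-Id\|_{\Tr}=O(h^{-(d-1)})$ is not immediately available, whereas \eqref{lieneigensingular} (hence Lemma~\ref{paranoid2}) holds regardless of trapping. So each approach buys something: yours is shorter for the non-trapping theorem stated here; the paper's is what survives when \eqref{pakapte} is dropped.
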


Our objective in this paper is to show that this theorem remains true if the incoming and outgoing sets are non-empty. 
We define the incoming set at infinity as
\begin{equation}\label{defcapteinfini}
 \begin{aligned}
\tilde{\Gamma}^- &:=\{ (\omega,\eta)\in T^*\mathbb{S}^{d-1} \text{ such that } \rho_{\omega,\eta}\in \Gamma^- \}.
\end{aligned}
\end{equation}

Similarly, we define the outgoing set at infinity
$\tilde{\Gamma}^+\subset T^*\mathbb{S}^{d-1}$ by:
\begin{equation}\label{defcapteinfini2}
 (\omega',\eta')\in \tilde{\Gamma}^+ \Leftrightarrow \exists (x,\xi) \in
\Gamma^+; \Phi^t(x,\xi) = t\omega'+\eta' \text{   for   } t \text{ large enough.} 
\end{equation}

Note that $\tilde{\Gamma}^\pm$ are compact subsets of $T^*\mathbb{S}^{d-1}$, since if $\eta$ is large enough, a trajectory with impact parameter $\eta$ will not meet the interaction region, and therefore cannot be trapped.

Instead of supposing (\ref{pakapte}), we will make the following assumption.

\begin{hyp}\label{Trap}
\begin{equation}\label{peukapte}
\Vol(\tilde{\Gamma}^\pm)=0.
\end{equation}
\end{hyp}

This hypothesis is very mild: as we will see in the next section, it is satisfied as soon as the energy level $\mathcal{E}$ is non-degenerate in the sense that 
\begin{equation} \label{nondegenere}
dp_{|\mathcal{E}} \neq 0.
\end{equation}

We will also make an assumption which is the analogue of (\ref{divminko}). In the case when (\ref{pakapte}) does not hold, this assumption is slightly more technical to write, since $\kappa^l$ is not well-defined on all of $\mathcal{I}$. We will therefore postpone the precise statement of this assumption to Hypothesis \ref{Div} in section \ref{Chloe}.

\paragraph{Statement of the results}
Under these hypotheses, we may state our result.
\begin{theoreme} \label{theo}

Suppose that the manifold $(X,g)$ and the potential $V$ are such that Hypotheses \ref{Trap} and \ref{Div} are satisfied.
Let $f:\mathbb{S}^1 \longrightarrow \mathbb{C}$ be a continuous function
such that $1\notin \mathrm{supp} f$. Then we have
\[ \lim \limits_{h\rightarrow 0} \langle \mu_{h},f \rangle = \frac{\Vol(\mathcal{I})}{2\pi}
\int_0^{2\pi} f(e^{i\theta}) d \theta. \]
\end{theoreme}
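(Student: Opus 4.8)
The strategy is to reduce the problem, via a microlocal decomposition near the trapped directions, to the no-trapping result of \cite{Equid}. The key observation is that the measure $\mu_h$ counts $(2\pi h)^{d-1}\Tr f(S_h)$, and by the Weyl law / Birman--Krein type estimates already available for $S_h-Id$, the total mass $(2\pi h)^{d-1}\Tr\chi(S_h-Id)$ localized in phase space is asymptotically $(2\pi h)^{d-1}$ times the number of eigenvalues, which concentrates on the semiclassical wavefront set of $S_h-Id$; the latter is contained (by propagation of singularities) in $\overline{\mathcal{I}}$ up to $\tilde\Gamma^\pm$. First I would recall from the microlocal description of the scattering matrix (Alexandrova, Robert, or \cite{Resonances}) that on any compact set of $T^*\mathbb{S}^{d-1}$ disjoint from $\tilde\Gamma^-\cup\tilde\Gamma^+$, $S_h$ is, microlocally, a semiclassical Fourier integral operator associated to $\kappa$, with a classically computable principal symbol of modulus one; this is exactly the input used in \cite{Equid}, valid here once we stay away from the trapped directions.

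Next, fix $\delta>0$ and choose (using compactness of $\tilde\Gamma^\pm$ and Hypothesis \ref{Trap}) an open neighborhood $\mathcal{U}_\delta\supset\tilde\Gamma^+\cup\tilde\Gamma^-$ with $\Vol(\mathcal{U}_\delta)<\delta$, together with a pseudodifferential partition of unity $A_\delta+B_\delta=Id$ on $\Esp$, with $A_\delta$ microlocalized in $\mathcal{U}_\delta$ and $B_\delta$ supported away from $\tilde\Gamma^\pm$. I would then split $\Tr f(S_h) = \Tr\big(B_\delta f(S_h)B_\delta\big) + (\text{cross terms and }A_\delta\text{ term})$. For the $B_\delta$ term, $B_\delta S_h B_\delta$ differs from a genuinely non-trapping scattering situation only in a region where it is microlocally trivial, so one can run verbatim the argument of \cite{Equid} — decompose $f$ into Fourier modes $e^{in\theta}$, use $\Tr(S_h^n - Id)$, express this trace via the FIO calculus and stationary phase, and invoke Hypothesis \ref{Div} (the analogue of (\ref{divminko}) restricted to orbit segments that stay away from $\tilde\Gamma^\pm$) to kill the contributions of periodic points, leaving only the fixed-point locus whose contribution gives $\frac{\Vol(B_\delta\text{-region}\cap\mathcal{I})}{2\pi}\int f$. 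The $A_\delta$ and cross terms are controlled in absolute value: $(2\pi h)^{d-1}\big|\Tr(A_\delta f(S_h))\big|\lesssim \|f\|_\infty\,(2\pi h)^{d-1}\mathrm{rank}_{\mathrm{eff}}(A_\delta) \to \|f\|_\infty\,\Vol(\mathcal{U}_\delta)<\|f\|_\infty\,\delta$, using the standard semiclassical bound on the number of eigenvalues of $S_h-Id$ microlocalized in a set of volume $\Vol(\mathcal{U}_\delta)$ (itself a consequence of the sharp $\Tr$ estimate for $S_h-Id$ combined with the FIO structure of $S_h$ off $\tilde\Gamma^\pm$, and of $1\notin\mathrm{supp}f$ so that $f$ vanishes near the accumulation point of the spectrum). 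Letting $h\to0$ and then $\delta\to0$ yields the claim.

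The main obstacle is the uniform control of the $A_\delta$ contribution, i.e. showing that eigenvalues of $S_h$ ``hidden'' near the trapped directions cannot carry more than $O(\delta)$ of the counting mass — in the trapping regime $S_h$ is no longer a nice FIO near $\tilde\Gamma^\pm$ (resonances can pile up, and one could a priori fear an anomalous number of phase shifts away from $1$ produced by long trapped trajectories). The resolution is that Hypothesis \ref{Trap} forces the relevant phase-space region to have small volume, and the trace-class bound $\Tr|S_h-Id|=O(h^{-(d-1)})$ together with the a priori localization of $\mathrm{WF}_h(S_h-Id)$ gives the needed rank estimate: the number of eigenvalues $e^{i\beta_{h,n}}$ with $|\beta_{h,n}|\ge\epsilon_0$ coming from $\mathcal{U}_\delta$ is $O\big((2\pi h)^{-(d-1)}(\Vol(\mathcal{U}_\delta)+o(1))\big)$, uniformly in $h$. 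Making this last estimate rigorous — essentially an approximate-projection / min-max argument for $S_h-Id$ microlocalized to $\mathcal{U}_\delta$, not relying on any dynamical nondegeneracy inside $\mathcal{U}_\delta$ — is the technical heart of the paper, and is exactly the point where the new Hypothesis \ref{Trap} (rather than (\ref{pakapte})) enters. Everything else is a re-run of \cite{Equid} with $\mathcal{I}$ replaced by $\mathcal{I}\setminus\mathcal{U}_\delta$ and a final limit $\delta\to0$ using $\Vol(\tilde\Gamma^\pm)=0$.
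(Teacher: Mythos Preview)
Your broad strategy --- microlocal partition near the trapped directions, FIO calculus on the complement, small-volume control of the remainder --- is the right one, but the paper implements it differently, and the step you identify as ``the technical heart'' is neither what the paper does nor, as stated, well-posed. The paper never localizes $\Tr f(S_h)$; instead it proves, for each fixed $k\in\mathbb{Z}\setminus\{0\}$, the trace formula $\Tr(S_h^k-Id)=-(2\pi h)^{-(d-1)}\Vol(\mathcal{I})+o(h^{-(d-1)})$. The partition of unity is $k$-dependent and cuts off a neighbourhood of $\mathcal{P}_k=\mathcal{B}_k\cup\mathcal{F}_k$ (where $\kappa^k$ is undefined \emph{or} has a fixed point), not merely of $\tilde\Gamma^\pm$; a single $B_\delta$ chosen before decomposing $f$ into Fourier modes will not keep you away from $\mathcal{B}_k$ for all $k$. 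The contribution near $\mathcal{P}_k$ is bounded by the elementary inequality
\[
\bigl|\Tr\bigl((S_h^k-Id)\,Op_h(\chi)\bigr)\bigr|\le \|S_h^k-Id\|_{L^2\to L^2}\,\|Op_h(\chi)\|_{\mathcal{L}^1}\le C\,\varepsilon\, h^{-(d-1)},
\]
with no min-max or eigenvalue-localization argument. Your claim that ``the number of eigenvalues $e^{i\beta_{h,n}}$ with $|\beta_{h,n}|\ge\epsilon_0$ coming from $\mathcal{U}_\delta$ is $O\bigl((2\pi h)^{-(d-1)}\Vol(\mathcal{U}_\delta)\bigr)$'' is not meaningful as stated: eigenfunctions of $S_h$ are global, and precisely near $\tilde\Gamma^\pm$ the operator is \emph{not} an FIO, so there is no mechanism for attaching eigenvalues to that region. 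The paper neither needs nor proves anything of this kind.

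Your sketch also omits the two places where trapping genuinely creates new work. First, outside the interaction region one must show $\Tr\bigl((S_h^k-Id)(Id-Op_h(\psi_\varepsilon^1))\bigr)=O(h^\infty)$; this is done via the explicit integral representation of the kernel of $S_h-Id$ in terms of the cut-off resolvent, together with the bound $\|\mathbf{1}_{r_1\le|x|\le r_2}R_h\mathbf{1}_{r_1\le|x|\le r_2}\|=O(h^{-1})$ for cutoffs supported away from the trapping (Burq, Cardoso--Vodev). This external resolvent input is exactly where one rules out anomalous growth due to trapping --- it is not an FIO argument, and it does not appear in your outline. Second, passing from the monomials $z^k-1$ (which vanish \emph{at} $1$ but not \emph{near} $1$) to continuous $f$ with $1\notin\mathrm{supp} f$ requires an a priori bound $|\langle\mu_h,f\rangle|\le C\|f\|$ in a norm for which such polynomials are dense. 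The paper obtains this from Christiansen's eigenvalue estimate $|e^{i\beta_{h,n}}-1|\le Ch^{-d}\exp\bigl(C/h-n^{1/(d-1)}/C\bigr)$, which yields $\#\{n:|e^{i\beta_{h,n}}-1|\ge e^{-L/h}\}\le C_0(L/h)^{d-1}$. The trace-class bound $\Tr|S_h-Id|=O(h^{-(d-1)})$ you invoke would itself require an input of this type, and in any case does not by itself give a rank estimate localized to $\mathcal{U}_\delta$.
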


\begin{remarque}
For simplicity, we shall only state and prove this result for smooth potentials, but it should still be true for less regular potentials, as long as the Hamiltonian dynamics is well defined. The proof should work without many changes for a potential $V\in C^1_c(X;\mathbb{R})$.
\end{remarque}

As in \cite{Equid}, we may deduce the following corollary.
\begin{corolaire}\label{repart}
Let $0<\phi_1<\phi_2< 2\pi$ be angles, and let $N_h(\phi_1,\phi_2)$ be the number of eigenvalues $e^{i\beta_{h,n}}$ of $S_h$ with $\phi_1\leq \beta_{h,n}\leq \phi_2$ modulo $2\pi$. Then we have
\begin{equation*}
\lim\limits_{h\rightarrow 0} (2\pi h)^{d-1} N_h(\phi_1,\phi_2) = \Vol(\mathcal{I}) \frac{\phi_2-\phi_1}{2\pi}.
\end{equation*}
\end{corolaire}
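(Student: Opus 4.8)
The plan is to deduce Corollary \ref{repart} from Theorem \ref{theo} by a routine sandwiching argument. Since $0<\phi_1<\phi_2<2\pi$, the closed arc $A=\{e^{i\theta}:\phi_1\le\theta\le\phi_2\}$ does not contain the point $1\in\mathbb{S}^1$; writing $\mathbf{1}_A$ for its indicator, we have by the definition of $N_h$
\[
(2\pi h)^{d-1}N_h(\phi_1,\phi_2)=(2\pi h)^{d-1}\sum_{n\in\mathbb{N}}\mathbf{1}_A\big(e^{i\beta_{h,n}}\big),
\]
a finite sum, since $\beta_{h,n}\to 0$ as $n\to\infty$ and so only finitely many of the $e^{i\beta_{h,n}}$ lie in $A$. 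The indicator $\mathbf{1}_A$ is not continuous, but it can be squeezed between continuous functions supported away from $1$.

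So first I would fix a small $\delta>0$ with $\phi_1-\delta>0$ and $\phi_2+\delta<2\pi$, and choose continuous functions $f_\delta^-,f_\delta^+:\mathbb{S}^1\to[0,1]$ with $1\notin\mathrm{supp}(f_\delta^\pm)$ such that $f_\delta^-\le\mathbf{1}_A\le f_\delta^+$, where moreover $f_\delta^-\equiv 1$ on $\{e^{i\theta}:\phi_1+\delta\le\theta\le\phi_2-\delta\}$ with $\mathrm{supp}(f_\delta^-)\subset A$, while $f_\delta^+\equiv 1$ on $A$ with $\mathrm{supp}(f_\delta^+)\subset\{e^{i\theta}:\phi_1-\delta\le\theta\le\phi_2+\delta\}$. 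Since $f_\delta^-\le\mathbf{1}_A\le f_\delta^+$ pointwise and the functional $g\mapsto(2\pi h)^{d-1}\sum_n g(e^{i\beta_{h,n}})$ is monotone and agrees with $\langle\mu_h,g\rangle$ for continuous $g$, we get
\[
\langle\mu_h,f_\delta^-\rangle\le(2\pi h)^{d-1}N_h(\phi_1,\phi_2)\le\langle\mu_h,f_\delta^+\rangle .
\]
Applying Theorem \ref{theo} to $f_\delta^\pm$ (legitimate, as $1\notin\mathrm{supp}(f_\delta^\pm)$) and letting $h\to0$ then yields
\[
\frac{\Vol(\mathcal{I})}{2\pi}(\phi_2-\phi_1-2\delta)\le\liminf_{h\to0}(2\pi h)^{d-1}N_h(\phi_1,\phi_2)\le\limsup_{h\to0}(2\pi h)^{d-1}N_h(\phi_1,\phi_2)\le\frac{\Vol(\mathcal{I})}{2\pi}(\phi_2-\phi_1+2\delta),
\]
where I used $\phi_2-\phi_1-2\delta\le\int_0^{2\pi}f_\delta^-(e^{i\theta})\,d\theta$ and $\int_0^{2\pi}f_\delta^+(e^{i\theta})\,d\theta\le\phi_2-\phi_1+2\delta$. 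Letting $\delta\to0$ pins the limit down to $\Vol(\mathcal{I})\frac{\phi_2-\phi_1}{2\pi}$.

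There is essentially no obstacle here; the only point worth stating is why the endpoints $e^{i\phi_1}$ and $e^{i\phi_2}$ of the arc carry no mass in the limit. This is immediate because the limiting measure $\frac{\Vol(\mathcal{I})}{2\pi}\,d\theta$ produced by Theorem \ref{theo} is absolutely continuous, hence atomless, which is exactly what forces $\int_0^{2\pi}f_\delta^\pm(e^{i\theta})\,d\theta\to\phi_2-\phi_1$ as $\delta\to0$. The same argument applies verbatim with the non-strict inequalities in the definition of $N_h(\phi_1,\phi_2)$ replaced by strict ones.
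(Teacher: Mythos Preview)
Your proof is correct and follows exactly the approach indicated in the paper: sandwich the indicator $\mathbf{1}_{[\phi_1,\phi_2]}$ between continuous functions supported away from $1$, apply Theorem~\ref{theo}, and let the approximation parameter go to zero. The paper does not spell out the details but refers to \cite{Equid}; your write-up supplies precisely those details.
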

The proof of Corollary \ref{repart} is exactly the same as that of Corollary 1.2 in \cite{Equid}: we simply approach uniformly the indicator function $\mathbf{1}_{[\phi_1,\phi_2]}$ by continuous functions and use Theorem \ref{theo}. We refer to \cite{Equid} for more details.

\paragraph{Relation to other works}
The distribution of the eigenvalues of the scattering matrix has been studied since the eighties (\cite{birman1982asymptotic}, \cite{birman1984asymptotic}, \cite{sobolev1985phase}). More recently, in the (non-semiclassical) high-energy limit, it was studied in \cite{EquidBP}, and extended to more general Hamiltonians in \cite{EquidMag} and \cite{EquidNakamura}.
For related topics in the physics literature for obstacle scattering, see \cite{EquidObst}.

In the semiclassical setting, equidistribution of phase shifts was first observed in \cite{EquidSph} for spherically symmetric potentials, and in \cite{Equid} for more general non-trapping potentials. It was also studied in \cite{EquidPol} for long-range potentials, without any assumption on the classical dynamics. In \cite{zelditch1999spacing}, the authors obtain much finer results on the distribution of phase shifts in the semiclassical limit for a family of surfaces of revolution.

Just as in \cite{Equid}, the main tool in the proof of the equidistribution of phase shifts is the fact that the scattering matrix is a Fourier Integral Operator associated to the scattering map microlocally away from the incoming and outgoing directions. This was proven in \cite{Alex}, and also in \cite{HaWu} in a geometric non-trapping setting.

The scattering map is trivial outside of the interaction region, while it can be very complicated inside the interaction region. This mixed behaviour is somehow similar to the situation described in \cite{MarOK}, where the authors prove a Weyl law for general systems for which the phase space can be separated into a part where the classical dynamics is periodic, and another where its is ergodic.

\paragraph{Acknowledgements}
The author would like to thank Stéphane Nonnenmacher for supervising this project and for useful discussions. He would also like to thank Jesse Gell-Redman, Andrew Hassell and Steve Zelditch for their comments which helped improve the first version of the manuscript. Last but not least, the author would like to thank the anonymous referee who helped improve many aspects of this paper, in particular by suggesting Lemma \ref{conditionnondegener} and some of the estimates in section \ref{fin}.

The author is partially supported by the Agence Nationale de la Recherche project GeRaSic (ANR-13-BS01-0007-01).
\section{Classical dynamics} \label{Chloe}

Recall that, if $\Point\in \CoS$, $\rho_{\omega,\eta}$ was defined in (\ref{defrho}), and that we defined the sets $\tilde{\Gamma}^\pm$ in (\ref{defcapteinfini}) and (\ref{defcapteinfini2}). 

Although we will not use it in the sequel, let us prove now the fact announced in the introduction that (\ref{nondegenere}) implies Hypothesis \ref{Trap}. The proof is standard (it is very similar to that of \cite[Proposition 6.5]{Resonances} or \cite[Proposition A.3]{gerard1987semiclassical}), but we recall it for the reader's convenience.
\begin{lemme}\label{conditionnondegener}
Suppose that $p: T^*X\rightarrow \mathbb{R}$ is such that $dp_{|\mathcal{E}} \neq 0$. Then we have $\Vol(\tilde{\Gamma}^\pm)=0.$
\end{lemme}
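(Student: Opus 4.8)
The plan is to prove the statement for $\tilde{\Gamma}^-$; the case of $\tilde{\Gamma}^+$ then follows by applying it to the time-reversed flow $\Phi^{-t}$ (equivalently, conjugating by $(x,\xi)\mapsto(x,-\xi)$), which interchanges $\Gamma^+$ and $\Gamma^-$. The starting point is that outside $X_0$ the dynamics is ``escaping'': there $\dot{\xi}=0$ and $q(x,\xi):=\langle x,\xi\rangle$ is strictly increasing along the flow at a fixed positive rate, while $|q|\le Q_0$ on the (compact) part of $\mathcal{E}$ lying over $\{|x|\le T_0\}$. I would first record the consequences. (a) A trajectory never meeting $X_0$ is a straight line, hence unbounded in both time directions, so $\tilde{\Gamma}^-\subset\mathcal{I}$. (b) A trajectory leaving $\{|x|\le T_0\}$ with $q\ge 0$ escapes to infinity and never returns, so any bounded excursion of a trajectory outside $\{|x|\le T_0\}$ lasts a bounded time and therefore stays in a fixed ball $\{|x|\le R_1\}$. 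Consequently, any $\rho\in\Gamma^-$ whose trajectory meets $X_0$ stays in that fixed compact region at all times after its first visit to $X_0$; and any $\rho\in\Gamma^-\setminus K$ (forward bounded, backward unbounded) is, for sufficiently negative times, an incoming straight line with $q<0$, so it lies on the trajectory of some $\rho_{\omega,\eta}$ with $\Point\in\tilde{\Gamma}^-$ and $\rho\notin\Gamma^+$.

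The core step is to prove $\Vol_{\mathcal{E}}(\Gamma^-\setminus K)=0$, where $\Vol_{\mathcal{E}}$ denotes the Liouville measure on $\mathcal{E}$; it is here, and essentially only here, that the hypothesis $dp_{|\mathcal{E}}\neq 0$ enters, as it makes $\mathcal{E}$ a smooth hypersurface and $\Vol_{\mathcal{E}}$ a smooth $\Phi^t$-invariant measure that is finite on compact sets. Let $\mathcal{K}'$ be the compact part of $\mathcal{E}$ lying over $X_0\cup\{|x|\le R_1+1\}$. For $\rho\in\Gamma^-\setminus K$, the monotonicity of $q$ shows that $\{t\in\mathbb{R}:\ \Phi^t(\rho)\in\mathcal{K}'\}$ is exactly a half-line $[\sigma(\rho),+\infty)$: going forward the trajectory enters $\mathcal{K}'$ (already on its incoming straight line) and never leaves it, and going backward it eventually exits $\mathcal{K}'$ with $q<0$ and $|x|$ increasing, hence never returns. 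The function $\sigma$ is Borel, $\sigma\circ\Phi^{-m}=\sigma+m$, and $\rho\in\mathcal{K}'$ iff $\sigma(\rho)\le 0$. With $C:=\{\rho\in\Gamma^-\setminus K:\ \sigma(\rho)\in[0,1)\}$, the sets $\Phi^{-m}(C)=\{\sigma\in[m,m+1)\}$, $m\in\mathbb{Z}$, partition $\Gamma^-\setminus K$, and for $m\ge 1$ the sets $\Phi^{m}(C)=\{\sigma\in[-m,-m+1)\}$ are pairwise disjoint subsets of $\mathcal{K}'$; hence $\sum_{m\ge 1}\Vol_{\mathcal{E}}(C)=\sum_{m\ge 1}\Vol_{\mathcal{E}}(\Phi^{m}C)\le\Vol_{\mathcal{E}}(\mathcal{K}')<\infty$, which forces $\Vol_{\mathcal{E}}(C)=0$ and then $\Vol_{\mathcal{E}}(\Gamma^-\setminus K)=\sum_{m\in\mathbb{Z}}\Vol_{\mathcal{E}}(\Phi^{-m}C)=0$. (This is a Poincaré-type recurrence argument: infinitely many pairwise disjoint translates of $C$, all of the same measure, cannot fit inside the finite-measure set $\mathcal{K}'$.)

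Finally I would transfer this to $\CoS$. The map $J:\Point\mapsto\rho_{\omega,\eta}$ is a smooth injection onto a $(2d-2)$-dimensional submanifold $\Sigma_0\subset\mathcal{E}$ transverse to the flow (it is $\Phi^{T_0}$ composed with $\Point\mapsto(-T_0\omega+\eta,\omega)$, and $H_p$ cannot vanish at a point of $\Sigma_0$ since the trajectory through it is a non-constant straight line for negative times); moreover the Liouville measure of $\CoS$ is carried by $J$ to a smooth positive density on $\Sigma_0$ equivalent to the flux measure $\iota_{H_p}\,d\Vol_{\mathcal{E}}$. By the flow-box normal form, if $A\subset\Sigma_0$ has positive mass for this measure then its flow-out $\bigcup_{|s|<\varepsilon}\Phi^s(A)$ has positive $\Vol_{\mathcal{E}}$-mass; taking $A=J(\tilde{\Gamma}^-)$, whose flow-out lies in the invariant set $\Gamma^-\setminus K$ by the end of the first paragraph and which is $\Vol_{\mathcal{E}}$-null by the core step, gives $\Vol(\tilde{\Gamma}^-)=0$. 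The main obstacle is the core step; the point that makes it work is that the monotonicity of $q$ outside the interaction region confines the entire incoming tail $\Gamma^-\setminus K$ to one fixed compact set and makes each of its trajectories enter that set once and for all, so the time-of-entry partition yields infinitely many disjoint pieces of equal Liouville measure.
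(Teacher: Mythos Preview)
Your proof is correct and follows essentially the same Poincar\'e--recurrence idea as the paper: both exploit the flow-invariance of the Liouville measure on $\mathcal{E}$ to show that a subset of $\Gamma^-$ lying over the Euclidean region cannot have positive measure, since infinitely many pairwise disjoint time-translates of it would all sit inside a fixed compact piece of $\mathcal{E}$. Your packaging is more elaborate---using the escape function $q=\langle x,\xi\rangle$, a time-of-entry fundamental domain $C$, and an explicit transversal $\Sigma_0$ to transfer back to $\CoS$---while the paper works directly with an annulus $C_{r_0,r_1}$ and its translates $\Phi^{\mp jt_0}(\Gamma^\pm\cap S^*C_{r_0,r_1})$; but the two arguments are the same in substance.
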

\begin{proof}
Suppose that $dp_{|\mathcal{E}} \neq 0$. $\mathcal{E}$ is then a smooth manifold, which can be equipped with the Liouville measure $\mu$. This measure is invariant by the Hamiltonian flow $(\Phi^t)$.

Note that, outside of $X_0$, this measure is just the Lebesgue measure on $S^*(X\backslash X_0)$, 
so that, if we define for $r_0$ large enough and $r_1>r_0$ the annulus $$C_{r_0,r_1} := (B(0,r_1)\backslash B(0,r_0))
\subset (\mathbb{R}^d\backslash K_0) \simeq (X\backslash X_0),$$ we have
\begin{equation*}
\Vol (\tilde{\Gamma}^\pm)=0 ~~\Leftrightarrow ~~ \exists ~0< r_0< r_1~~ \text{large enough such that } \mu \big{(} \Gamma^\pm \cap S^*C_{r_0,r_1}\big{)}=0.
\end{equation*}

Suppose for contradiction that we may find $0< r_0< r_1$ such that $\mu \big{(} \Gamma^\pm \cap S^*C_{r_0,r_1}\big{)}>0.$ 

Since the motion of a point in $\Gamma^\pm\cap S^*C_{r_0,r_1}$ as $\pm t\geq 0$  is just a straight line, we may find a time $t_0= t_0(r_0,r_1)$ such that for any $j\geq 1$, $\Phi^{\pm j t_0} \big{(} \Gamma^\pm \cap S^*C_{r_0,r_1}\big{)} \cap \big{(} \Gamma^\pm \cap S^*C_{r_0,r_1}\big{)} = \emptyset$. Since $\Phi^{t_0}$ is a diffeomorphism, we then have that for all $j,j'\in \mathbb{N}$ with $j\neq j',$ $$\Big{(}\Phi^{\mp j t_0} \big{(} \Gamma^\pm \cap S^*C_{r_0,r_1}\big{)}\Big{)} \cap \Big{(}\Phi^{\mp j' t_0} \big{(} \Gamma^\pm \cap S^*C_{r_0,r_1}\big{)}\Big{)}  = \emptyset.$$

Since $\mu$ is invariant by the Hamiltonian flow, we have that
\begin{equation*}
\begin{aligned}
\mu \Big{(} \bigcup_{j=0}^\infty \Phi^{\mp j t_0} \big{(} \Gamma^\pm \cap S^*C_{r_0,r_1}\big{)} \Big{)}
&= \sum_{j=0}^\infty \mu\Big{(}\Phi^{\mp j t_0} \big{(} \Gamma^\pm \cap S^*C_{r_0,r_1}\big{)}\Big{)}\\
&= \sum_{j=0}^\infty \mu \big{(} \Gamma^\pm \cap S^*C_{r_0,r_1}\big{)}\\
&= +\infty,
\end{aligned}
\end{equation*}
by assumption. But for all $j\geq 0$, $\Phi^{\mp j t_0} \big{(} \Gamma^\pm \cap S^*C_{r_0,r_1}\big{)}$ belongs to a compact region of $\mathcal{E}$, where the base points are either in $X_0$, or in $B(0,r_1)\subset \mathbb{R}^d$. Hence, we must have $\mu \Big{(} \bigcup_{j=0}^\infty \Phi^{\mp j t_0} \big{(} \Gamma^\pm \cap S^*C_{r_0,r_1}\big{)} \Big{)} < + \infty$, a contradiction.
\end{proof}

If $\rho_{\omega,\eta}\notin \Gamma^-$, then there exists $\omega'\in \mathbb{S}^{d-1}$, $\eta'\in (\omega')^\perp\subset \mathbb{R}^d$ and $t'\in \mathbb{R}$ such that for all $t$ large enough,
\begin{equation*}
\pi_X\big{(} \Phi^t(\rho_{\omega,\eta})\big{)} = \omega'(t-t')+\eta'.
\end{equation*}

We may then define the (classical) scattering map
$$\kappa : T^*\mathbb{S}^{d-1} \backslash \tilde{\Gamma}^-
\longrightarrow T^*\mathbb{S}^{d-1} \backslash \tilde{\Gamma}^+ $$
by $(\omega',\eta') = \kappa (\omega,\eta)$. $\kappa$ is then a
symplectomorphism.

We define the ``good" sets $(\mathcal{G}^+_k)\subset \CoS$ and $(\mathcal{G}^-_k)\subset
\CoS$ by induction for $k\in \mathbb{N}$, by
\begin{equation}\label{defbonensemble}
\begin{aligned}
\mathcal{G}^+_0 &:= \CoS \backslash \tilde{\Gamma}^-,~~~~ &\mathcal{G}^+_{k+1} &:=
\{\Point \in \mathcal{G}^+_k \text{ such that } \kappa\Point \in \mathcal{G}^+_0\}\\
\mathcal{G}^-_0 &:= \CoS \backslash \tilde{\Gamma}^+,~~~~ &\mathcal{G}^-_{k+1} &:=
\{\Point \in \mathcal{G}^-_k \text{ such that } \kappa^{-1}\Point \in \mathcal{G}^-_0\}.
\end{aligned}
\end{equation}
The scattering map may then be iterated and inverted, to obtain for
any $k\geq 1$  symplectomorphisms
$$\kappa^k : \mathcal{G}^+_{k-1} \longrightarrow \mathcal{G}^-_{k-1},$$
$$\kappa^{-k} : \mathcal{G}^-_{k-1} \longrightarrow \mathcal{G}^+_{k-1},$$
or, written in a more condensed way, we may define for $k\in \mathbb{Z}\backslash \{0\}$, $\kappa^k: \mathcal{G}_{|k|-1}^{\epsilon(k)} \rightarrow \mathcal{G}_{|k|-1}^{-\epsilon(k)}$, where $\epsilon(k)$ is the sign of $k$.

We also define, for $k\in \mathbb{Z}\backslash \{0\}$ 
\begin{equation}\label{bad}
\mathcal{B}_k:= T^*\mathbb{S}^{d-1}\backslash \mathcal{G}_{|k|-1}^{\epsilon(k)}.
\end{equation}
$\mathcal{B}_k$ is hence the ``bad" set where $\kappa^k$ is not well-defined.

\begin{lemme}\label{mesnul}
Suppose Hypothesis \ref{Trap} is satisfied, and let $k\in \mathbb{Z}\backslash\{0\}$. Then $\mathcal{B}_k$ has zero Liouville measure.
\end{lemme}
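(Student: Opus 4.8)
The plan is to argue by induction on $|k|$, using only two ingredients: Hypothesis \ref{Trap} (which gives $\Vol(\tilde{\Gamma}^\pm)=0$) and the fact, recalled just above, that $\kappa$ is a symplectomorphism from $\CoS\backslash\tilde{\Gamma}^-$ onto $\CoS\backslash\tilde{\Gamma}^+$. The first thing I would record is that $\kappa$, being a symplectomorphism, preserves the symplectic volume form, hence preserves the Liouville measure $\Vol$ on its domain; the same holds for $\kappa^{-1}$. Thus for any measurable $A\subset\CoS$ the partial preimage $\kappa^{-1}(A)\subset\mathcal{G}^+_0$ satisfies $\Vol\big(\kappa^{-1}(A)\big)=\Vol\big(A\cap\mathcal{G}^-_0\big)\le\Vol(A)$, and symmetrically for $\kappa$.

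By the symmetry exchanging the roles of $+$ and $-$, of $\tilde{\Gamma}^-$ and $\tilde{\Gamma}^+$, and of $\kappa$ and $\kappa^{-1}$, it suffices to treat $k\ge 1$, i.e.\ to prove that $\mathcal{G}^+_{k-1}$ has full Liouville measure for every $k\ge 1$. The base case is immediate: $\mathcal{B}_1=\CoS\backslash\mathcal{G}^+_0=\tilde{\Gamma}^-$, which has measure zero by Hypothesis \ref{Trap}.

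For the inductive step, assume $\Vol(\mathcal{B}_k)=0$, i.e.\ $\mathcal{G}^+_{k-1}$ has full measure. Since $\mathcal{G}^+_k\subset\mathcal{G}^+_{k-1}$, I would write $\mathcal{B}_{k+1}=\mathcal{B}_k\cup(\mathcal{G}^+_{k-1}\backslash\mathcal{G}^+_k)$. Every point of $\mathcal{G}^+_{k-1}$ lies in $\mathcal{G}^+_0=\mathrm{dom}\,\kappa$, so $\kappa$ is defined there and sends it into $\mathcal{G}^-_0$; hence by the definition \eqref{defbonensemble} of $\mathcal{G}^+_k$ one gets $\mathcal{G}^+_{k-1}\backslash\mathcal{G}^+_k=\{\Point\in\mathcal{G}^+_{k-1}:\ \kappa\Point\in\tilde{\Gamma}^-\}\subset\kappa^{-1}\big(\tilde{\Gamma}^-\cap\mathcal{G}^-_0\big)$. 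By the measure-preservation observation above, this set has Liouville measure at most $\Vol(\tilde{\Gamma}^-)=0$. Together with $\Vol(\mathcal{B}_k)=0$ this yields $\Vol(\mathcal{B}_{k+1})=0$, closing the induction. (Equivalently, unwinding the recursion, $\mathcal{G}^+_{k-1}=\bigcap_{j=0}^{k-1}\kappa^{-j}(\mathcal{G}^+_0)$ is an intersection of partial preimages under the measure-preserving map $\kappa$ of the full-measure set $\mathcal{G}^+_0$, so its complement is null.)

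The only delicate point — and the one place I expect to have to be careful rather than the place where real difficulty lies — is the bookkeeping of domains: $\kappa^j$ is merely a partially defined map, so at each stage one must check that the composition is genuinely defined on the set to which it is applied (here, that $\mathcal{G}^+_{k-1}\subset\mathrm{dom}\,\kappa=\mathcal{G}^+_0$, which is clear since $\mathcal{G}^+_{k-1}\subset\mathcal{G}^+_0$ by construction) and that every preimage/image manipulation uses $\kappa$ only where it is an actual diffeomorphism onto $\mathcal{G}^-_0$. Once this is set up cleanly, the measure-zero conclusion is automatic.
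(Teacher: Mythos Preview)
Your proof is correct and follows essentially the same approach as the paper's: both rely on the fact that $\tilde{\Gamma}^\pm$ has zero Liouville measure and that $\kappa$ preserves this measure, then use the recursive definition \eqref{defbonensemble} to conclude. The paper's proof is a one-line sketch leaving the induction implicit, while you have written out the induction and the domain bookkeeping explicitly; there is no substantive difference.
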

\begin{proof}
By assumption, $\tilde{\Gamma}^\pm$ has zero Liouville measure. Since $\kappa$ preserves the Liouville measure, we see from (\ref{defbonensemble}) that $\mathcal{G}_k^\pm$ has full measure. 
\end{proof}

For $l\in
\mathbb{Z}\backslash\{0\}$, we define the set of $l$-periodic interacting points as
\begin{equation}\label{defperiod}
\mathcal{F}_l:= \{ \Point\in \mathcal{I}\cap
\mathcal{G}^{\epsilon(l)}_{|l|-1}; \kappa^l\Point=\Point \},
\end{equation}
where $\epsilon(l)$ is the sign of $l$. Note that this set is closed.

Our \emph{diversion hypothesis} is the following.
\begin{hyp}\label{Div}
For any $l\in \mathbb{Z}\backslash\{0\}$, the Liouville measure of $\mathcal{F}_l$ is 0.
\end{hyp}

We conjecture that if $V\equiv 0$ and if $(\mathrm{Int}(X_0),g)$ has (not uniformly) negative curvature, where $\mathrm{Int}(X_0)$ denotes the interior of $X_0$ then this hypothesis holds.

Note that, since $\kappa^l$ preserves the volume, this Hypothesis is equivalent to the seemingly weaker statement that
for any $l\in \mathbb{N}\backslash\{0\}$, the Liouville measure of $\mathcal{F}_l$ is 0.

Note also that this hypothesis implies that 
\begin{equation}\label{lordship}
\Vol (\partial \mathcal{I})= 0.
\end{equation}
Indeed, a point in the boundary of $\mathcal{I}$ is in $\mathcal{I}$ because $\mathcal{I}$ is closed, and it is fixed by $\kappa$.

Before proving Theorem \ref{theo}, we need to recall a few facts and definitions from semiclassical analysis.

\section{Refresher on semiclassical analysis}
\subsection{Pseudodifferential calculus} \label{greve}
Let $Y$ be a compact manifold ($Y$ will often be $\mathbb{S}^{d-1}$ in the sequel).
We shall say that a function $a(x,\xi;h)\in C^{\infty}(T^*Y\times (0,1])$ is in the class $S^{comp}(T^*Y)$ if it can be written as 
\begin{equation*}
a(x,\xi;h)= \tilde{a}_h(x,\xi) + O\Big{(}\Big{(}\frac{h}{\langle \xi \rangle}\Big{)}^\infty \Big{)},
\end{equation*}
where $\tilde{a}_h\in C_c^\infty(T^*Y)$, with $\mathrm{supp} (\tilde{a}_h)\subset \Omega$ for some bounded  open set $\Omega$ independent of $h$, and where $\tilde{a}_h$ is bounded in any $C^k(\Omega)$ norm independently of $h$.

We associate to $S^{comp}(T^*Y)$ the algebra of pseudodifferential operators
$\Psi_h^{comp}(Y)$, through a surjective quantization map
\begin{equation*}Op_h:S^{comp}(T^*Y)\longrightarrow \Psi^{comp}_h(Y).
\end{equation*} This quantization
map is defined using coordinate charts, and the standard Weyl quantization
on $\mathbb{R}^d$. It is therefore not intrinsic. However, the principal
symbol map
\begin{equation*}\sigma_h : \Psi^{comp}_h (Y)\longrightarrow S^{comp}(T^*Y)/
h S^{comp}(T^*Y)
\end{equation*} is intrinsic, and we have
\begin{equation*}\sigma_h(A\circ B) = \sigma_h (A) \sigma_h(B)
\end{equation*}
and
\begin{equation*}\sigma_h\circ Op: S^{comp}(T^* Y) \longrightarrow S^{comp} (T^*Y) /h
S^{comp}(T^*Y)
\end{equation*}
is the natural projection map.

For more details on all these maps and their construction, we refer the reader
to \cite[Chapter 14]{Zworski_2012}.

For $a\in S^{comp}(T^*Y)$, we say its essential support is equal to a given
compact $\mathcal{K}\Subset T^*Y$,
\begin{equation*} \text{ ess supp}_h a = \mathcal{K} \Subset T^*Y,
\end{equation*}
if and only if, for all $\chi \in C_c^\infty(T^*Y)$,
\begin{equation*}\mathrm{supp} (\chi) \subset (T^*Y\backslash K) \Rightarrow \chi a \in h^\infty S^{comp}(T^*
Y).
\end{equation*}
For $A\in \Psi^{comp}_h(Y), A=Op_h(a)$, we define the wave front set of $A$ as:
\begin{equation*}WF_h(A)= \text{ ess supp}_h a,
\end{equation*}
noting that this definition does not depend on the choice of the
quantization.

\subsection{Lagrangian states and Fourier Integral Operators}\label{DSK}
In this section, we will recall the definition of Fourier Integral Operators with notations inspired by \cite{DG}. We refer to this paper and to the references therein for the classical proofs we omit.
\paragraph{Phase functions}
Let $\phi(y,\theta)$ be a smooth real-valued function on some open subset $U_\phi$ of $Y\times \mathbb{R}^L$, for some $L\in \mathbb{N}$. We call $x$ the \emph{base variables} and $\theta$ the \emph{oscillatory variables}. We say that $\phi$ is a \emph{nondegenerate phase function} if the differentials $d(\partial_{\theta_1} \phi)...d(\partial_{\theta_L}\phi)$ are linearly independent on the \emph{critical set }
\begin{equation*}
C_\phi:=\{ (y,\theta); \partial_\theta \phi =0 \} \subset U_\phi.
\end{equation*}
In this case
\begin{equation*}
\Lambda_\phi:= \{(y,\partial_y \phi(y,\theta)); (y,\theta)\in C_\phi \} \subset T^*Y
\end{equation*}
is an immersed Lagrangian manifold. By shrinking the domain of $\phi$, we can make it an embedded Lagrangian manifold. We say that $\phi$ \emph{generates} $\Lambda_\phi$.

\paragraph{Lagrangian states}
Given a phase function $\phi$ and a symbol $a\in S^{comp}(U_\phi)$, consider the $h$-dependent family of functions
\begin{equation}\label{massai}
u(y;h)= h^{-L/2} \int_{\mathbb{R}^L} e^{i\phi(y,\theta)/h} a(y,\theta;h) d\theta.
\end{equation}
We call $u=(u(h))$ a \emph{Lagrangian state}, (or a \emph{Lagrangian distribution}) generated by $\phi$. 

\begin{definition}\label{Grenoble}
Let $\Lambda\subset T^*Y$ be an embedded Lagrangian submanifold. We say that an $h$-dependent family of functions $u(y;h)\in C_c^\infty(Y)$ is a (compactly supported and compactly microlocalized) \emph{Lagrangian state associated to $\Lambda$}, if it can be written as a sum of finitely many functions of the form (\ref{massai}), for different phase functions $\phi$ parametrizing open subsets of $\Lambda$, plus an $O(h^\infty)$ remainder in the $C^\infty(Y)$ topology. We will denote by $I^{comp}(\Lambda)$ the space of all such functions.
\end{definition}

\paragraph{Fourier integral operators}
Let $Y, Y'$ be two manifolds of the same dimension $d$, and let $\kappa$ be a symplectomorphism from an open subset of $T^*Y$ to an open subset of $T^*Y'$. Consider the Lagrangian
\begin{equation*}
\Lambda_\kappa =\{(y,\nu;y',-\nu'); \kappa(y,\nu)=(y',\nu')\}\subset T^*Y\times T^*Y'= T^*(Y\times Y').
\end{equation*}
A compactly supported operator $T:\mathcal{D}'(Y)\rightarrow C_c^\infty(Y')$ is called a (semiclassical) \emph{Fourier integral operator} associated to $\kappa$ if its Schwartz kernel $K_T(y,y')$ lies in $h^{-d/2}I^{comp}(\Lambda_\kappa)$. We write $T\in I^{comp}(\kappa)$. Note that such an operator is automatically trace class. The $h^{-d/2}$ factor is explained as follows: the normalization for Lagrangian states is chosen so that $\|u\|_{L^2}\asymp 1$, while the normalization for Fourier integral operators is chosen so that $\|T\|_{L^2(Y)\rightarrow L^2(Y')} \asymp 1$.

Note that if $\kappa\circ \kappa'$ is well defined, and if $T\in I^{comp}(\kappa)$ and $T'\in I^{comp}(\kappa')$, then $T\circ T'\in I^{comp} (\kappa\circ \kappa')$.

The main property we will use about FIOs is the following, which is an easy version of \cite[Proposition 2]{Equid}.

\begin{lemme}\label{pafix}
Let $\kappa : T^*Y \supset U \rightarrow V\subset T^*Y$ have no fixed point, and let $T\in I^{comp}(\kappa)$. Then
\begin{equation*}
\Tr (T) = O(h^\infty)
\end{equation*}
\end{lemme}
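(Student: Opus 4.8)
The plan is to write the trace as an oscillatory integral over the diagonal of $Y\times Y$ and apply non-stationary phase, exploiting that the stationary points of the resulting phase are exactly the fixed points of $\kappa$ — of which there are none by hypothesis.

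First I would unwind the definitions. Since $T\in I^{comp}(\kappa)$, its Schwartz kernel $K_T$ lies in $h^{-d/2}I^{comp}(\Lambda_\kappa)$, so by Definition \ref{Grenoble} it is a finite sum of terms of the form
\[
h^{-d/2-L/2}\int_{\mathbb{R}^L} e^{i\phi(y,y',\theta)/h}\,a(y,y',\theta;h)\,d\theta ,
\]
plus an $O(h^\infty)$ remainder, where each $\phi$ is a nondegenerate phase function parametrizing an open piece of $\Lambda_\kappa\subset T^*(Y\times Y)$ and $a\in S^{comp}$. The operator is trace class, with $\Tr T=\int_Y K_T(y,y)\,dy$, and since this quantity is intrinsic I am free to use any convenient local parametrizations. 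By linearity it is enough to estimate the contribution of a single term, which, after restricting to the diagonal $y=y'$, is
\[
h^{-d/2-L/2}\int_Y\!\!\int_{\mathbb{R}^L} e^{i\Phi(y,\theta)/h}\,a(y,y,\theta;h)\,d\theta\,dy,\qquad \Phi(y,\theta):=\phi(y,y,\theta).
\]

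Next I would show that $d\Phi\neq 0$ on the support of the integrand (up to the $O(h^\infty)$ error allowed by $S^{comp}$, this support lies in a fixed compact set). Away from the critical set $C_\phi=\{\partial_\theta\phi=0\}$ one has $\partial_\theta\Phi=\partial_\theta\phi\neq 0$. On $C_\phi$, the defining property of $\Lambda_\kappa$ says the point $(y,\partial_y\phi;\,y',\partial_{y'}\phi)$ equals $(y,\nu;\,y',-\nu')$ with $\kappa(y,\nu)=(y',\nu')$; hence, by the chain rule, $\partial_y\Phi(y,\theta)=(\partial_y\phi+\partial_{y'}\phi)(y,y,\theta)=\nu-\nu'$. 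If $d\Phi$ vanished at $(y,\theta)$ we would have $y=y'$ (the diagonal) together with $\nu=\nu'$, so $\kappa(y,\nu)=(y,\nu)$, a fixed point of $\kappa$ — excluded. By compactness this upgrades to $|d\Phi|\geq c>0$ there, and the standard non-stationary phase lemma (repeated integration by parts against $\tfrac{h}{i}\,|d\Phi|^{-2}\langle d\Phi,\nabla_{(y,\theta)}\rangle$, each step gaining a power of $h$) shows the integral, prefactor included, is $O(h^\infty)$. Summing the finitely many pieces and adding the remainder finishes the proof.

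The step needing the most care — though it is routine — is the second one: matching the sign convention in the definition of $\Lambda_\kappa$ with the identities $\partial_y\phi=\nu$, $\partial_{y'}\phi=-\nu'$ on $C_\phi$, and checking that restricting the amplitude to the diagonal preserves the compact, $h$-uniform control of the support that non-stationary phase requires. Everything else is bookkeeping, and it may be worth remarking at the outset that $\Tr T$ does not depend on the choice of local phase functions, which is what licenses working piecewise.
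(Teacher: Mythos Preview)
Your proof is correct and follows essentially the same route as the paper's own sketch: write the kernel as a finite sum of local oscillatory integrals, restrict to the diagonal, and observe that a stationary point of $\Phi(y,\theta)=\phi(y,y,\theta)$ forces both $\partial_\theta\phi=0$ and $(\partial_y\phi+\partial_{y'}\phi)|_{y=y'}=0$, which would produce a fixed point of $\kappa$; non-stationary phase then gives $O(h^\infty)$. Your version is in fact a bit more explicit (spelling out the identification $\partial_y\Phi=\nu-\nu'$ via the sign convention in $\Lambda_\kappa$, and the compactness/uniformity needed for the integration-by-parts argument), but the argument is the same.
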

\begin{proof}(Sketch)
By definition, the integral kernel of $T$ can be written as a finite sum of terms of the form
\begin{equation*}
(2\pi h)^{-L}\int_{\mathbb{R}^L} e^{i\phi(y,y';\theta)/h}a(y,y',\theta,h)d\theta,
\end{equation*}
where $\phi$ locally parametrises $\Lambda_\kappa$ in the sense that in some open subset $U\subset T^*(Y\times Y')$, we have
\begin{equation*}\Lambda_\kappa \cap U=\{(y,\partial \phi_{y'} (y,y',\theta), y', -\partial_{y} \phi(y,y',\theta)); (y,y',\theta) \text{ such that } \partial_\theta \phi(y,y',\theta)=0\}.
\end{equation*}

The trace is then given by a sum of terms of the form
\begin{equation*}
\frac{1}{(2\pi h)^{L+d-1}} \int_Y \int_{\mathbb{R}^L} e^{i\frac{\phi(y,y;\theta)}{h}} a(y,y,\theta,h) d\theta dy.
\end{equation*}

The fact that $\kappa$ has no fixed point implies that if $(y,y,\theta)$ are such that $\partial_\theta \phi(y,y,\theta)=0$, we have $\partial_y\big{[}\phi(y,y,\theta)\big{]}=\big{[}\partial_y\phi(y,y',\theta)+ \partial_{y'}\phi(y,y',\theta)\big{]}_{y=y'}\neq 0$. Then, by non-stationary phase, we obtain the result.
\end{proof}

\subsection{The scattering matrix as a FIO}\label{FIO}
The main result we will use about the scattering matrix in this paper is \cite[Theorem 5]{Alex}, which can be rephrased as follows.

\begin{theo}[Alexandrova 2005]
(i) Let $\Point\in \mathcal{G}^+_0$. If $U$ is an open neighbourhood of $\Point$ contained in $\mathcal{G}^+_0$ and $A\in \Psi_h^{comp}(\mathbb{S}^{d-1})$ is such that $WF_h(A)\subset U$, then we have $S_h A\in I^{comp}(\kappa|_U)$.

(ii) $S_h$ is microlocally equal to the identity away from the interaction region in the following sense. If $a\in S^{comp}(\mathbb{S}^{d-1})$ is such that $a\equiv 1$ near $\mathcal{I}$, then we have 
\begin{equation}\label{fleur2}
\|(S_h-Id)(Id-Op_h(a))\|_{L^2(\mathbb{S}^{d-1})\rightarrow L^2(\mathbb{S}^{d-1})} = O(h^\infty).
\end{equation}
\end{theo}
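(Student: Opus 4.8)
The plan is to deduce both statements from an explicit description of $S_h$ in terms of the classical dynamics, via a distorted-plane-wave (Isozaki--Kitada type) parametrix. Recall that $S_h$ is encoded by the asymptotics at infinity of the generalised eigenfunctions of $P_h$ at energy $1$; equivalently, one has a stationary representation of the schematic form $S_h = Id + c_d\, h^{-(d-1)}\, \Gamma_h\big(Q - Q R_h(1+i0) Q\big)\Gamma_h^*$, where $R_h(z)=(P_h-z)^{-1}$, $Q$ is a properly supported operator encoding the perturbation (the potential $V$ together with $\Delta_g-\Delta_{\mathrm{eucl}}$, cut off to a neighbourhood of $X_0$), and $\Gamma_h:L^2_{\mathrm{comp}}\to C^\infty(\mathbb{S}^{d-1})$ is the rescaled restriction-type operator sending a function to the angular profile of the (incoming, resp.\ outgoing) free solution of energy $1$ that it represents. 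The operators $\Gamma_h,\Gamma_h^*$ are themselves semiclassical Fourier integral operators, whose canonical relations identify a point $\Point\in\CoS$ with the incoming (resp.\ outgoing) ray it labels; in the presence of the metric perturbation one must first ``dress'' them with Isozaki--Kitada modifiers built from solutions of the eikonal equation in the outgoing and incoming regions. With such a formula in hand, both claims reduce to understanding $R_h(1+i0)$ microlocally on $\mathcal{E}$ along the relevant piece of the flow $\Phi^t$.

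For (i), fix $\Point\in\mathcal{G}^+_0$, i.e.\ $\rho_{\omega,\eta}\notin\Gamma^-$, and a neighbourhood $U\subset\mathcal{G}^+_0$. Shrinking $U$, I would arrange that $\overline U$ is compact and, crucially, that the forward escape time from $X_0$ is bounded by a fixed $T<\infty$ uniformly over $\overline U$: this uses only that $\Phi^t(\rho_{\omega,\eta})$ eventually leaves $B(0,T_0)$ with non-negative radial velocity, an open and flow-stable condition. After microlocalising by $A=Op_h(a)$, $WF_h(A)\subset U$, one inserts $R_h(1+i0)=\tfrac i h\int_0^\infty e^{-it(P_h-1)/h}\,dt$ and discards the tail $\int_T^\infty$: the rays involved there have already left $X_0$ for good, so the sandwiching factors $Q$ annihilate that contribution modulo $\negl$. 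One is left with the finite-time propagator $\tfrac i h\int_0^T e^{-it(P_h-1)/h}\,dt$, which for $t$ in a bounded range is constructed by WKB --- solving the Hamilton--Jacobi equation $\partial_t\phi_t+|\nabla_x\phi_t|_g^2+V=1$ and the transport equations along $\Phi^t$, licit because after microlocalisation the flow stays in a fixed compact set for time $\le T$ --- and is a semiclassical FIO associated to $\Phi^t$. Composing $\Gamma_h$, $Q$, this propagator, $Q$, $\Gamma_h^*$ and performing the stationary phase in $t$ (which imposes the on-shell condition) and in the oscillatory variables, one finds the critical manifold to be exactly $\{(\Point,\kappa\Point):\Point\in U\}$ and the phase to be a generating function of $\kappa|_U$; hence $S_hA\in I^{comp}(\kappa|_U)$. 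The delicate point in this step is to balance the leading term $Id\cdot A$ against the singularity produced by the $Q$-only part of $Q-QR_hQ$ (equivalently, against the jump $R_h(1+i0)-R_h(1-i0)$): these two individually non-microlocalised contributions must combine so that the total wavefront over $U$ sits inside $\Lambda_{\kappa|_U}$, with the diagonal conormal surviving precisely over the fixed-point set of $\kappa$.

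For (ii) the argument is classical. If $a\equiv1$ near $\mathcal{I}$, then $WF_h\big(Id-Op_h(a)\big)\subset\CoS\setminus\mathcal{I}$, i.e.\ is carried by rays $\rho_{\omega,\eta}$ that never meet $X_0$. Along such a ray the metric is Euclidean and $V\equiv0$ throughout the trajectory, so the unique outgoing solution with that incoming data is literally the corresponding free one; in the representation above this is the statement that $Q$ kills this microlocal region, leaving only $Id\cdot\big(Id-Op_h(a)\big)$. Quantitatively, the kernel of $(S_h-Id)\big(Id-Op_h(a)\big)$ is a parametrix-type oscillatory integral whose phase has no critical point on $WF_h\big(Id-Op_h(a)\big)$; non-stationary phase yields an $\negl$ bound in a fixed weighted Hilbert--Schmidt norm, and the a priori bound $\|S_h-Id\|_{L^2\to L^2}\le 2$ lets one trade this for the operator-norm estimate (\ref{fleur2}).

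The main obstacle throughout is the control of $R_h(1+i0)$, equivalently of $e^{-it(P_h-1)/h}$, on rays issued from $\overline U$ that linger near the trapped set $K$ before escaping. Everything hinges on the reduction to a \emph{fixed}, uniformly bounded time $T$, which in turn rests only on the elementary fact that the forward escape time is locally bounded at any non-trapped point; granting this, the argument uses nothing but fixed-time FIO machinery, and the trapping of \emph{other} orbits never enters --- this is precisely why the statement needs no hypothesis on $K$. The remaining work --- constructing the Isozaki--Kitada modifiers in the presence of the metric perturbation so that $\Gamma_h,\Gamma_h^*$ are honest FIOs, checking that every stationary-phase reduction is uniform on $\overline U$ and that all remainders are genuinely $\negl$ in the appropriate topologies, and carrying out the diagonal-term bookkeeping above --- is routine but lengthy, and is the content of \cite{Alex}.
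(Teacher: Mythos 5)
The first thing to say is that the paper does not prove this statement at all: it is quoted (after rephrasing) from \cite{Alex}, Theorem 5, with \cite{HaWu} mentioned as an alternative source in a geometric non-trapping setting. Your write-up ultimately does the same thing --- its last sentence defers the construction of the Isozaki--Kitada modifiers, the uniform stationary-phase reductions and the remainder bookkeeping to \cite{Alex} --- so what you have produced is a heuristic outline of the standard argument rather than an independent proof. As an outline it is broadly the right strategy (stationary representation of the amplitude, reduction to a finite time $T$ using that the escape time is locally bounded at non-trapped points, WKB/FIO description of the finite-time propagator, stationary phase yielding a generating function of $\kappa|_U$).

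Read as a proof, however, the step that is genuinely missing is exactly the one where trapping could bite. You reduce to time $T$ and then ``discard the tail'' of $\tfrac{i}{h}\int_0^\infty e^{-it(P_h-1)/h}\,dt$ on the grounds that the sandwiching factors annihilate it, concluding that ``the trapping of other orbits never enters''. But the tail is $R_h(1+i0)\,e^{-iT(P_h-1)/h}$ (the integral itself is only conditionally defined), and every parametrix remainder it acts on gets hit by the full outgoing resolvent, whose cutoff norm near the trapped set is in general only bounded by $e^{C/h}$; an $\negl$ error times $e^{C/h}$ is not small, so flow-stability of the escape time on $\overline{U}$ does not by itself close the argument. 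What actually saves the day is that in the representation (\ref{ampliscatt}) the resolvent is sandwiched between commutators $[h^2\Delta,\chi_i]$ supported in annuli far from $X_0$, where the truncated resolvent is $O(h^{-1})$ by Burq/Cardoso--Vodev \emph{with no non-trapping hypothesis} --- precisely estimate (\ref{burq}), which the paper itself needs in Lemma \ref{Out}; your sketch never invokes any such estimate, and without it the claim that the statement ``needs no hypothesis on $K$'' is unsupported. A similar uniformity issue is glossed over in (ii): $Id-Op_h(a)$ is not compactly microlocalized, so the operator-norm bound (\ref{fleur2}) must be uniform up to arbitrarily large frequencies (this is what forces the separate high-angular-momentum arguments of Lemmas \ref{harmononstat} and \ref{Out} in the paper), and ``non-stationary phase in a fixed weighted Hilbert--Schmidt norm'' does not by itself supply that uniformity.
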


\section{Trace formula}
Our aim in this section will be to prove the following proposition, which
is the cornerstone of the proof in \cite{Equid}.
\begin{proposition} \label{chimp}
Suppose that the manifold $(X,g)$ and the potential $V$ are such that Hypotheses \ref{Trap} and \ref{Div} are satisfied, and let $k\in \mathbb{Z}\backslash\{0\}$.
Then we have
\begin{equation}\label{hello}
\Tr \big{(}S_{h}^k-Id\big{)} = -\frac{\Vol(\mathcal{I})}{(2\pi h)^{d-1}} +o(h^{-(d-1)}).
\end{equation}
\end{proposition}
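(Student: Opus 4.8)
The plan is to decompose $S_h^k - Id$ using a microlocal partition of unity adapted to the interaction region $\mathcal{I}$ and then identify the trace of each piece. First I would pick $a \in S^{comp}(\mathbb{S}^{d-1})$ with $a \equiv 1$ near $\mathcal{I}$, supported in a slightly larger neighbourhood, and write $S_h^k - Id = (S_h^k - Id)Op_h(a) + (S_h^k - Id)(Id - Op_h(a))$. Iterating part (ii) of Alexandrova's theorem together with unitarity of $S_h$, the second term is $O(h^\infty)$ in trace norm (since $S_h - Id$ is trace class and each factor $(S_h - Id)(Id - Op_h(a))$ is $O(h^\infty)$ in $L^2 \to L^2$, one telescopes $S_h^k - Id = \sum_{j} S_h^j(S_h - Id)$ and inserts $Id - Op_h(a)$). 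So $\Tr(S_h^k - Id) = \Tr\big((S_h^k - Id)Op_h(a)\big) + O(h^\infty)$, and it remains to compute $\Tr\big(S_h^k Op_h(a)\big) - \Tr\big(Op_h(a)\big)$. The second of these is a standard Weyl-type computation: $\Tr(Op_h(a)) = (2\pi h)^{-(d-1)}\big(\int_{T^*\mathbb{S}^{d-1}} a \, d\Vol + o(1)\big)$, and since $a \equiv 1$ near $\mathcal{I}$ and is supported in a neighbourhood whose volume can be taken arbitrarily close to $\Vol(\mathcal{I})$ — using $\Vol(\partial \mathcal{I}) = 0$ from (\ref{lordship}) — this contributes $-\Vol(\mathcal{I})/(2\pi h)^{d-1} + o(h^{-(d-1)})$ after letting the neighbourhood shrink.

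The heart of the matter is therefore to show $\Tr\big(S_h^k Op_h(a)\big) = o(h^{-(d-1)})$. Here I would further decompose $Op_h(a)$ using a finer partition of unity $\sum_i Op_h(b_i) = Op_h(a) + O(h^\infty)$ where each $b_i$ is supported in a small set, and group the pieces according to whether the support lies in the "good" set $\mathcal{G}^{\epsilon(k)}_{|k|-1}$ (where $\kappa^k$ is defined and smooth) or near the "bad" set $\mathcal{B}_k$. For a piece $Op_h(b_i)$ microlocalized in $\mathcal{G}^{\epsilon(k)}_{|k|-1}$, iterating part (i) of Alexandrova's theorem gives $S_h^k Op_h(b_i) \in I^{comp}(\kappa^k|_{U_i})$ for the relevant small open set $U_i$. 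If in addition $U_i$ can be chosen to avoid the fixed-point set of $\kappa^k$, Lemma \ref{pafix} gives $\Tr(S_h^k Op_h(b_i)) = O(h^\infty)$. The fixed points of $\kappa^k$ in $\mathcal{I}$ form exactly $\mathcal{F}_k$, which has zero Liouville measure by Hypothesis \ref{Div}; fixed points of $\kappa^k$ outside $\mathcal{I}$ are harmless because $\kappa$ is the identity there, but those pieces are absorbed into the $\Tr(Op_h(a))$ computation rather than producing a genuine contribution — more carefully, one only needs the cancellation for the part of $a$ supported in (a neighbourhood of) $\mathcal{I}$, and on $\mathcal{I}$ one covers $\mathcal{I} \setminus (\text{nbhd of } \mathcal{F}_k \cup \mathcal{B}_k)$ by such good sets $U_i$, while the complement has arbitrarily small volume.

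The main obstacle — and where the bulk of the work lies — is controlling the contribution of the pieces microlocalized near $\mathcal{F}_k$ and near the bad set $\mathcal{B}_k$. By Lemma \ref{mesnul} and Hypothesis \ref{Div}, both $\mathcal{B}_k$ and $\mathcal{F}_k$ have zero Liouville measure, so for any $\delta > 0$ I can choose an open neighbourhood $W_\delta \supset \mathcal{F}_k \cup \mathcal{B}_k$ with $\Vol(W_\delta) < \delta$, and split $a = a_{good} + a_{bad}$ with $a_{bad}$ supported in $W_\delta$. The good part is handled as above and gives $O(h^\infty)$; for the bad part one uses that $\|S_h^k Op_h(a_{bad})\|_{\Tr} \leq \|Op_h(a_{bad})\|_{\Tr} = (2\pi h)^{-(d-1)}(\int |a_{bad}| + o(1)) = O(\delta h^{-(d-1)}) + o(h^{-(d-1)})$, since $S_h$ is unitary. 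Hence $\Tr(S_h^k Op_h(a)) = O(\delta h^{-(d-1)}) + o(h^{-(d-1)})$; letting $h \to 0$ then $\delta \to 0$ yields $o(h^{-(d-1)})$. Subtle points to get right: the compactness of $\mathcal{I}$ and of $\mathcal{B}_k$ (so that finitely many charts suffice), the fact that one may legitimately shrink the open sets $U_i$ so that $\kappa^k|_{U_i}$ is a genuine symplectomorphism onto its image with no fixed point, and the uniformity of all error terms in the (finitely many, once $\delta$ is fixed) pieces of the partition. The case $k < 0$ is identical using $S_h^{-k} = (S_h^*)^k$ and $\kappa^{-1}$, exchanging the roles of $\mathcal{G}^+$ and $\mathcal{G}^-$.
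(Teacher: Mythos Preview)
Your overall architecture matches the paper's: a partition of unity separating the interaction region from its complement, use of Alexandrova's theorem and Lemma~\ref{pafix} on the ``good'' part of $\mathcal{I}$, and a trace-norm bound on the small-volume piece near $\mathcal{F}_k\cup\mathcal{B}_k$. The paper packages this with two cutoffs $\psi_\varepsilon^1,\psi_\varepsilon^2$ (inner and outer) instead of your single $a$, but that is cosmetic.

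There is, however, one genuine gap. You assert that $(S_h^k-Id)(Id-Op_h(a))$ is $O(h^\infty)$ \emph{in trace norm}, justifying this by ``$S_h-Id$ is trace class and $(S_h-Id)(Id-Op_h(a))=O(h^\infty)$ in $L^2\to L^2$.'' These two facts do not combine to give a small trace or trace norm: an operator can be trace class with arbitrarily small operator norm and yet have trace norm of order $1$ (take a diagonal operator with $h^N$ repeated $h^{-2N}$ times). A priori one has no polynomial bound on $\|S_h-Id\|_{\Tr}$; the estimate (\ref{lieneigensingular}) used later in the paper gives at best an exponentially large bound. So telescoping and inserting $Id-Op_h(a)$ only yields $O(h^\infty)$ in operator norm, which is not enough.

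The paper handles this step by a direct computation (Lemma~\ref{Out}): expand the trace in the spherical-harmonic basis $(\phi_\ell^m)$, split at $\ell\sim h^{-2}$, use (\ref{fleur2}) termwise for the polynomially many low-$\ell$ terms, and for high $\ell$ use the explicit representation (\ref{ampliscatt}) of the kernel of $S_h-Id$ as an oscillatory integral over a \emph{compact} set in $\mathbb{R}^d$, together with the resolvent bound (\ref{burq}), to get $\langle\phi_\ell^m,(S_h^k-Id)\phi_\ell^m\rangle=O((R/h\ell)^\infty)$ via nonstationary phase (Lemma~\ref{harmononstat}). This is where the trapping assumption is weak enough that Burq's exterior resolvent estimate suffices. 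Without an argument of this type, the contribution from high angular momenta (where $Id-Op_h(a)$ is essentially the identity) is uncontrolled.
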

\begin{proof}
To prove this proposition, we fix $k\in \mathbb{Z}\backslash\{0\}$, and build an adapted partition of
unity. 

\paragraph{Partition of unity}
Recall that $\mathcal{B}_k$ was defined in (\ref{bad}), and is the set where $\kappa^k$ is not well-defined.
We will write
\begin{equation*}
\mathcal{P}_k:= \mathcal{B}_k\cup \mathcal{F}_k\subset T^*\mathbb{S}^{d-1},
\end{equation*}
where $\mathcal{F}_k$ are as in (\ref{defperiod})
This set is closed, has zero Liouville measure by Lemma \ref{mesnul} and Hypothesis \ref{Div}, and the map $\kappa^k$ is well-defined and has no fixed points in $\mathcal{I}\backslash \mathcal{P}_k$.

Since $\mathcal{P}_k$ is closed with zero Liouville measure, by outer regularity of the Liouville measure, we may find for each $\varepsilon\bel 0$ a cut-off function $\chi_\varepsilon^k\in C_c^\infty(T^*\mathbb{S}^{d-1};[0,1])$ such that $\chi_\varepsilon^k\Point=1$ if $\Point\in \mathcal{P}_k$, such that the support of $\chi_\varepsilon^k$ is contained in an $\varepsilon$-neighbourhood of $\mathcal{I}$, and such that the Liouville measure of the support of $\chi_\varepsilon^k$ is smaller than $\varepsilon$:
\begin{equation*}
\Vol(\mathrm{supp}(\chi_\varepsilon^k))\leq \varepsilon.
\end{equation*}
We denote by $Op_h(\chi_\varepsilon^k)$ the Weyl quantization of $\chi_\varepsilon^k$, as defined in section \ref{greve}.

We also take $\psi_\varepsilon^1\in C_c^\infty(\CoS;[0,1])$ such that $\psi_\varepsilon^1=1$ near $\mathcal{I}$ and $\psi_\varepsilon^1\Point=0$ if $d(\Point, \mathcal{I})\geq \varepsilon$ and $\psi_\varepsilon^2\in C_c^\infty(\CoS;[0,1])$ such that $\psi_\varepsilon^2=0$ outside of $\mathcal{I}$, and $\psi_{\varepsilon}^2=1$ outside of an $\varepsilon$-neighbourhood of $T^*\mathbb{S}^{d-1}\backslash \mathcal{I}$ (see Figure \ref{cutoff}).
\begin{figure}
    \center
   \includegraphics[scale=0.4]{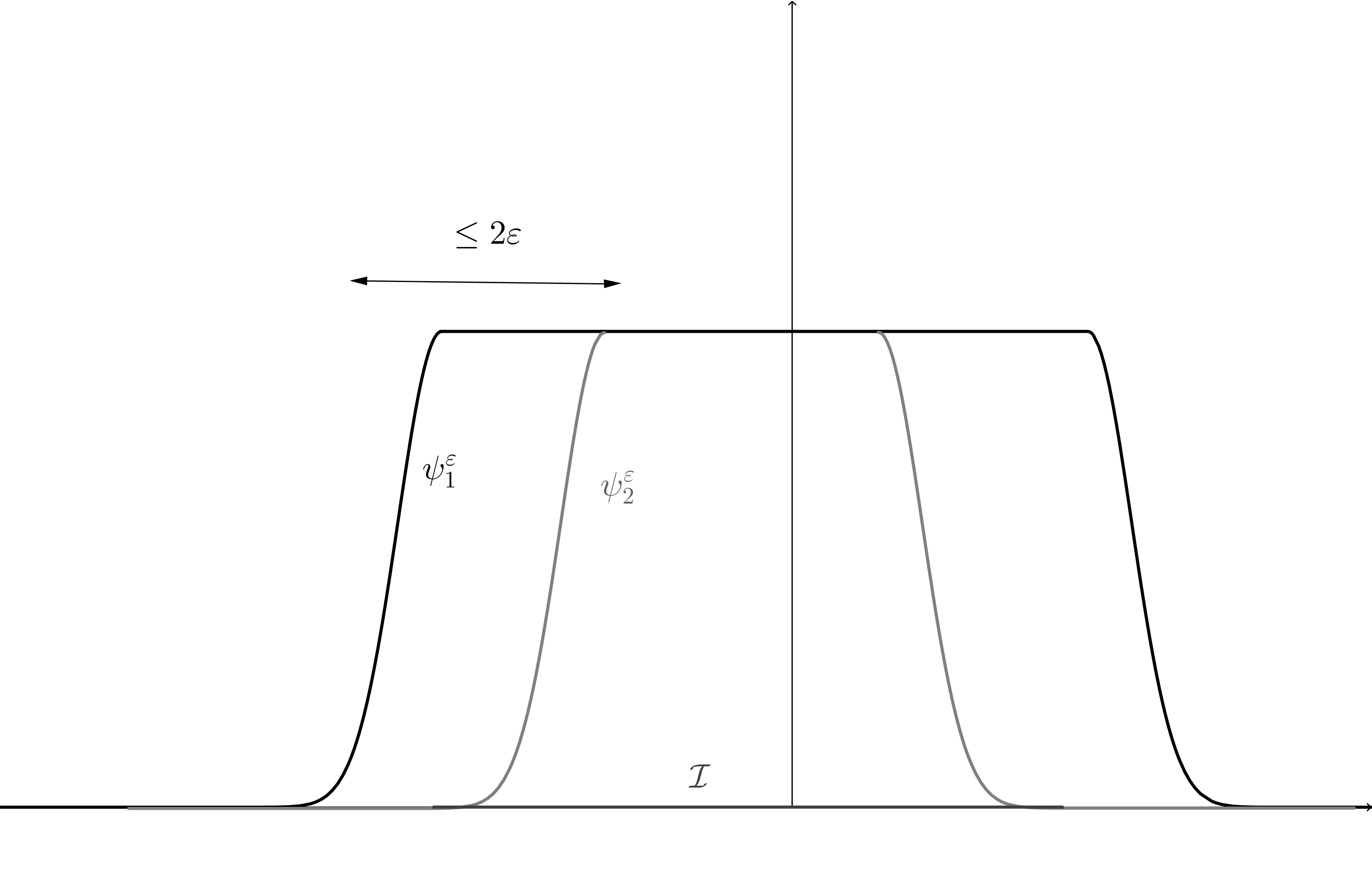}
    \caption{The cut-off functions $\psi_1^\varepsilon$ and $\psi_2^\varepsilon$} \label{cutoff}
\end{figure}

Note that we have for all $\Point\in\CoS$, $\psi_\varepsilon^1\Point \geq \psi_\varepsilon^2\Point$, and that $\|\psi_\varepsilon^1-\psi_\varepsilon^2\|_{L^1}=O(\epsilon)$ thanks to (\ref{lordship}).

We have
\begin{equation}\label{partition}
1 =  (1-\psi_\varepsilon^1)+ \psi_\varepsilon^2(1-\chi_\varepsilon^k)+ \psi_\varepsilon^2\chi_\varepsilon^k  + (\psi_\varepsilon^1-\psi_\varepsilon^2).
\end{equation}

The first term corresponds to points outside of the interaction region. The second term corresponds to points in the interaction region which are neither trapped nor fixed, while the last two terms have a support of a size $O(\varepsilon)$. We shall compute the trace of $(S_h^k-Id)$ using this decomposition.

\paragraph{Trace inside the interaction region}
By Alexandrova's Theorem (see Section \ref{FIO}), we have that $S_h^{k} Op_h(\psi_\varepsilon^2(1-\chi_\epsilon^k))$ is a Fourier integral operator associated to $\kappa^{k}_{|_{\mathcal{I}\backslash \mathcal{P}_k}}$ microlocally near 
$(\mathcal{I}\backslash \kappa^{k}(\mathcal{P}_k))\times (\mathcal{I}\backslash \mathcal{P}_k)$.

Since, by definition of $\mathcal{P}_k$, $\kappa^{k}$ has no fixed points in $\mathcal{I}\backslash \mathcal{P}_k$, Lemma \ref{pafix} tells us that
\begin{equation*}
\Tr (S_h^{k} Op_h(\psi_\varepsilon^2(1-\chi_\epsilon^k)) = O(h^\infty).
\end{equation*}

This implies that
\begin{equation}\label{inside}
\begin{aligned}
\Tr\big{(}(S_{h}^k-Id)Op_h(\psi_\varepsilon^2(1-\chi_\epsilon^k))\big{)} &= \Tr (Op_h(\psi_\varepsilon^2(1-\chi_\epsilon^k))) + O(h^\infty)\\
&= \frac{1}{(2\pi h)^{d-1}} \int_{\CoS} \psi_\varepsilon^2(1-\chi_\varepsilon^k) +O(h^{2-d})\\
&=\frac{1}{(2\pi h)^{d-1}} \Vol(\mathcal{I}) +h^{-(d-1)}r_\varepsilon +O(h^{2-d}),
\end{aligned}
\end{equation}
where $r_\varepsilon$ is independent of $h$, and is a $O(\varepsilon)$. To go from the first line to the second, we used the standard formula of the trace of a pseudodifferential operator as the integral of its symbol (see \cite[Appendix C]{Zworski_2012}).

\paragraph{Trace outside of the interaction region}
To estimate the trace outside of the interaction region, we shall consider an orthonormal basis of $L^2(\mathbb{S}^{d-1})$ made of spherical harmonics $\phi^m_\ell$ satisfying $(\Delta_{\mathbb{S}^{d-1}}-\ell(\ell+d-1))\phi^m_\ell=0$, where $\ell\in \mathbb{N}$, $0\leq m\leq d_\ell$. Here $d_\ell=O(\ell^{d-2})$, as can be seen using Weyl's law.

Let $R\bel 0$ be large enough so that 
\begin{equation*}
\mathcal{I}\subset \{\Point\in \CoS; |\eta|\leq R\}.
\end{equation*}

We need the following elementary lemma:
\begin{lemme}\label{harmononstat}
For all $R'\bel R$, $x\in B(0,R)\subset \mathbb{R}^d$, $h\bel 0$ and all $\ell\geq R'/h$, $m\leq d_\ell$, we have
\begin{equation*}
\int_{\mathbb{S}^{d-1}}  e^{i \langle\omega,x\rangle/h}\phi^m_\ell(\omega)d\omega = O \Big{(}\Big{(} \frac{R}{h \ell}\Big{)}^\infty\Big{)}.
\end{equation*}
\end{lemme}
\begin{proof}
We have, for any $n\in \mathbb{N}$, by integration by parts,
\begin{equation*}
\begin{aligned}
\int_{\mathbb{S}^{d-1}}  e^{i \langle\omega,x\rangle/h}\phi_\ell^m(\omega) d\omega= \frac{1}{(\ell(\ell+1))^n} \int_{\mathbb{S}^{d-1}} \phi^m_\ell(\omega) \Delta^n e^{i \langle\omega,x\rangle/h} d\omega.
\end{aligned}
\end{equation*}

Now, $\Delta^n e^{i \langle\omega,x\rangle/h}$ is bounded by $\Big{(}\frac{|x|}{h}\Big{)}^{2n}$ times a polynomial which depends only on $n$. The result follows.
\end{proof}

The following lemma allows us to estimate the trace outside of the interaction region.

\begin{lemme} \label{Out}
Suppose Hypotheses \ref{Trap} and \ref{Chloe} are satisfied, and take $k\in \mathbb{Z}$. We have
$$\Tr\big{(}(S_{h}^k-Id)(Id-Op_h(\psi_\varepsilon^1)) \big{)} = O(h^\infty).$$
\end{lemme}

\begin{proof}
Let us note first that thanks to (\ref{fleur2}), for each $\varepsilon>0$, $\ell\in \mathbb{N}$ and $m=1,...,d_\ell$ , we have 
\begin{equation}\label{dehors}
\|(S_h^k-Id)(Id-Op_h(\psi_\varepsilon^1))\phi_\ell^m\|=O(h^\infty).
\end{equation}

We have
\begin{equation*}
\begin{aligned}
\mathrm{Tr} \big{(}(S_{h}^k-Id)(Id-Op_h(\psi_\varepsilon^1))\big{)} &=
\sum_{\ell \in \mathbb{N}} \sum_{m=1}^{d_\ell}
\langle \phi^m_\ell, (S_h^k-Id)(Id-Op_h(\psi_\varepsilon^1)) \phi^m_\ell \rangle\\
&= \sum_{\ell <R'/h^2} \sum_{m=1}^{d_\ell}
\langle \phi^m_\ell, (S_h^k-Id)(Id-Op_h(\psi_\varepsilon^1)) \phi^m_\ell \rangle \\
&+ \sum_{\ell \geq R'/h^2} \sum_{m=1}^{d_\ell}
\langle \phi^m_\ell, (S_h^k-Id)(Id-Op_h(\psi_\varepsilon^1)) \phi^m_\ell \rangle\\
&= \sum_{\ell \geq R'/h^2} \sum_{m=1}^{d_\ell}
\langle \phi^m_\ell, (S_h^k-Id) \phi^m_\ell \rangle +O(h^\infty),
\end{aligned}
\end{equation*}
where $R'>R+\varepsilon$. Here, we dealt with the sum for $\ell<R'/h^2$ using (\ref{dehors}) and the fact that $d_\ell=O(\ell^{d-2})$.

Let us now bound the sum for $\ell \geq R'/h^2$.
Let us denote by $a_k(\omega,\omega';h)$ the integral kernel of $S_h^k-Id$. Recall the following representation\footnote{Note that this expression for $a(\omega,\omega';h)$ is smooth (and even analytic) in $\omega$ and $\omega'$, which shows that $S_h-Id$ is trace-class.} for $a_1$, which can be found in \cite{Alex}, equation (59):
\begin{equation}\label{ampliscatt}
a_1(\omega,\omega';h)=c(d,h)\int_{\mathbb{R}^d} e^{i \langle\omega,x\rangle/h} \big{(}[h^2\Delta,\chi_2]R_h [h^2\Delta,\chi_1] e^{i\langle \omega',\cdot\rangle/h}\big{)}(x)dx,
\end{equation}
where $R_h=(P_h-(1+i0))^{-1}$ is the outgoing resolvent, and $\chi_1$, $\chi_2$ are some functions in $C_c^\infty(X)$. Here, $c(d,h):=e^{-i\pi(d-3)/4} 2^{(-d+9)/4} (\pi h)^{(-d+1)/2}$ is a constant which depends polynomially in $h$. It was proven in \cite[\S 2]{petkov2001semi} that the representation (\ref{ampliscatt}) is indeed independent of the choice of the cut-off functions $\chi_1$ and $\chi_2$.

Now, from \cite[Theorem 4]{BurqRes} (see also \cite{cardoso2001uniform} for a more general statement, and \cite{datchev2014quantitative}, \cite{shapiro2016semiclassical} for similar statements with less regularity assumptions on $V$), we have that if $r_1\bel 0$ is large enough, and if $r_2\bel r_1$, then
\begin{equation}\label{burq}
\|\mathrm{1}_{r_1\leq |x|\leq r_2} R_h \mathrm{1}_{r_1\leq |x|\leq r_2}\|_{L^2(\mathbb{R}^d)\rightarrow L^2(\mathbb{R}^d)}= O(h^{-1}).
\end{equation}

From this, we get that $a_1(\omega,\omega';h) = \int_{\mathbb{R}^d} e^{i \langle\omega,x\rangle/h} f_1(x,\omega';h) dx$, where $f_1$ is a function which is smooth in $x$ and $\omega'$, which is bounded polynomially in $h$, and which has support for the first variable in a compact set independent of $h$ and $\omega'$.

Similarly, $a_k$ may be put in the form $a_k(\omega,\omega';h) = \int_{\mathbb{R}^d} e^{i \langle\omega,x\rangle/h} f_k(x,\omega';h) dx$, where $f_k$ is a function which is smooth in $x$ and $\omega'$, which has support for the first variable in a compact set independent of $h$ and $\omega'$, and which is bounded polynomially in $h$.

We have therefore
\begin{equation*}
\begin{aligned}
\langle \phi_\ell^m, (S_h^k-Id) \phi_\ell^m \rangle& = \int_{\mathbb{S}^{d-1}} d\omega \int_{\mathbb{S}^{d-1}} d \omega' \phi^m_\ell(\omega) \phi^m_\ell(\omega') a_k(\omega,\omega',h)\\
&= \int_{\mathbb{S}^{d-1}}d\omega' \phi^m_\ell(\omega') \int_{\mathbb{R}^d} dx f_k(x,\omega';h)\int_{\mathbb{S}^{d-1}} d\omega e^{i \langle\omega,x\rangle/h}\phi^m_\ell(\omega).
\end{aligned}
\end{equation*}

The last integral is bounded by $O \Big{(} \frac{R}{h \ell^2}\Big{)}^\infty$ thanks to Lemma \ref{harmononstat}. Therefore, since $f_k$ has support for the first variable in a compact set independent of $h$ and $\omega'$, and is bounded polynomially in $h$, we get that 
$\langle \phi_\ell, (S_h^k-Id) \phi_\ell \rangle= O \Big{(} \frac{R}{h \ell^2}\Big{)}^\infty$. We may then sum this estimate over $\ell \geq R'/h^2$ to get $$\sum_{\ell \geq R'/h^2} \sum_{m=1}^{d_\ell}
\langle \phi^m_\ell, (S_h^k-Id) \phi^m_\ell \rangle=O(h^\infty),$$
which concludes the proof of the lemma.
\end{proof}

\paragraph{Putting it all together}
Thanks to equation (\ref{partition}), we have
\begin{equation}\label{litres}
\begin{aligned}
\Tr \big{(}S_{h}^k-Id\big{)} &=\Tr \big{(} (S_{h}^k-Id)(Id-Op_h(\psi_\varepsilon^1)\big{)}\big{)}\\
&+\Tr \big{(}(S_{h}^k-Id)Op_h(\psi_\varepsilon^2(1-\chi_\epsilon^k))\big{)}\\
&+\Tr \big{(}(S_{h}^k-Id)Op_h\big{(}\psi_\varepsilon^2\chi_\varepsilon^k+(\psi_\varepsilon^1- \psi_\varepsilon^2)\big{)}\big{)}+O(h^{2-d}).
\end{aligned}
\end{equation}

To bound the last term, we use that
\begin{equation}\label{ptisup}
\begin{aligned}
\big{|}\Tr \big{(}(S_{h}^k-Id)Op_h\big{(}\psi_\varepsilon^2\chi_\varepsilon^k+(\psi_\varepsilon^1- \psi_\varepsilon^2)\big{)}\big{)}\big{|}&\leq
\|(S_{h}^k-Id)\|_{{L^2(\mathbb{S}^{d-1})\rightarrow L^2(\mathbb{S}^{d-1})}}\\
&\times\big{\|} Op_h\big{(}\psi_\varepsilon^2\chi_\varepsilon^k+(\psi_\varepsilon^1- \psi_\varepsilon^2)\big{)}\big{\|}_{\mathcal{L}^1}+ O(h^{2-d}) \\
&\leq h^{-(d-1)} r'_\varepsilon +O(h^{2-d}),
\end{aligned}
\end{equation}
where $r'_\varepsilon$ is independent of $h$, and is a $O(\varepsilon)$.

Thanks to (\ref{inside}), (\ref{ptisup}) and to Lemma \ref{Out}, equation (\ref{litres}) becomes
\begin{equation*}
h^{(d-1)}\Tr \big{(}S_{h}^k-Id\big{)}= \frac{\Vol(\mathcal{I})}{(2\pi)^{d-1}} +r_\varepsilon + r'_\varepsilon + O(h).
\end{equation*}

Since this is true for any $\varepsilon \bel 0$, we obtain the statement of Proposition \ref{chimp}.
\end{proof}

As a corollary to Proposition \ref{chimp}, we obtain the result for all trigonometric polynomials vanishing at 1, that is, for any function $p$ on $\mathbb{S}^1$ of the form $p(z)= \sum_{-N}^N a_k z^k$ for some coefficients $a_k\in \mathbb{C}$ with $a_0=0$.

\begin{corolaire}\label{brebis}
Suppose that Hypotheses \ref{Trap} and \ref{Div} are satisfied.
Let $p$ be a trigonometric polynomial vanishing at 1. Then we have
$$ \Tr \big{(} p(S_h)\big{)}= \frac{\Vol (\mathcal{I})}{(2\pi h)^{d-1}} \frac{1}{2\pi}
\oint_{\mathbb{S}^1} p(e^{i\theta}) d\theta + o(h^{-(d-1)}).$$
\end{corolaire}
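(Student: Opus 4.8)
\textbf{Proof proposal for Corollary \ref{brebis}.}
The plan is to reduce the statement for a general trigonometric polynomial vanishing at $1$ to the case of a single monomial $z^k$ with $k \neq 0$, which is precisely the content of Proposition \ref{chimp}. Write $p(z) = \sum_{k=-N}^{N} a_k z^k$ with $a_0 = 0$. By linearity of the trace, $\Tr\big(p(S_h)\big) = \sum_{k \neq 0} a_k \Tr\big(S_h^k\big)$, and since $p(1) = \sum_{k} a_k = 0$ with $a_0 = 0$ we have $\sum_{k \neq 0} a_k = 0$, so we may freely write $\Tr\big(p(S_h)\big) = \sum_{k \neq 0} a_k \Tr\big(S_h^k - Id\big)$. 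Each operator $S_h^k - Id$ is trace class (for $k > 0$ it is a telescoping sum $\sum_{j=0}^{k-1} S_h^j(S_h - Id)$ of trace-class operators composed with unitaries, and for $k < 0$ one uses $S_h^k - Id = -S_h^k(S_h^{-k} - Id)$), so all the traces are well defined.

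Now apply Proposition \ref{chimp} to each $k \in \{-N,\dots,N\}\setminus\{0\}$: we get $\Tr\big(S_h^k - Id\big) = -\dfrac{\Vol(\mathcal{I})}{(2\pi h)^{d-1}} + o(h^{-(d-1)})$. Summing with the coefficients $a_k$ and using once more that $\sum_{k\neq 0} a_k = 0$, the main terms cancel up to the overall factor: indeed $\sum_{k\neq 0} a_k\big(-\Vol(\mathcal{I})/(2\pi h)^{d-1}\big) = 0$, so one might worry the leading term vanishes. The resolution is that this is not the right bookkeeping; instead one should track the constant differently. The cleaner route: since $\frac{1}{2\pi}\oint_{\mathbb{S}^1} e^{ik\theta}\,d\theta = \delta_{k,0} = 0$ for $k \neq 0$, the claimed formula for a monomial $z^k$ ($k\neq 0$) reads $\Tr(S_h^k) = 0\cdot\frac{\Vol(\mathcal{I})}{(2\pi h)^{d-1}} + o(h^{-(d-1)})$, i.e. $\Tr(S_h^k) = o(h^{-(d-1)})$ — but Proposition \ref{chimp} gives $\Tr(S_h^k - Id) \sim -\Vol(\mathcal{I})/(2\pi h)^{d-1}$, and $\Tr(Id)$ is not defined, so the correct reading is that for each fixed $k \neq 0$ the quantity $\Tr(S_h^k)$ should be interpreted only through $\Tr(S_h^k - Id)$, and in forming $p(S_h)$ the "$Id$" pieces assemble into $p(1)\cdot(\text{divergent constant}) = 0$. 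Concretely: $p(S_h) = \sum_{k\neq 0} a_k(S_h^k - Id) + \big(\sum_{k\neq 0}a_k\big)Id = \sum_{k\neq 0}a_k(S_h^k - Id)$, so $\Tr\big(p(S_h)\big) = \sum_{k\neq 0} a_k\Tr\big(S_h^k-Id\big) = \sum_{k\neq 0}a_k\Big(-\frac{\Vol(\mathcal{I})}{(2\pi h)^{d-1}}\Big) + o(h^{-(d-1)}) = o(h^{-(d-1)})$, and separately $\frac{\Vol(\mathcal{I})}{(2\pi h)^{d-1}}\frac{1}{2\pi}\oint p(e^{i\theta})d\theta = \frac{\Vol(\mathcal{I})}{(2\pi h)^{d-1}}a_0 = 0$; the two sides agree.

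The one genuine subtlety — and the step I expect to require the most care — is confirming that the $o(h^{-(d-1)})$ error terms coming from Proposition \ref{chimp} can indeed be summed over the finitely many values $k \in \{-N,\dots,N\}\setminus\{0\}$ without losing the little-$o$ control. Since $N$ is fixed and finite, and each error is $o(h^{-(d-1)})$ as $h \to 0$, a finite linear combination of them with fixed coefficients $a_k$ is again $o(h^{-(d-1)})$; there is no uniformity issue because we never let $N \to \infty$. Thus the proof is essentially immediate from Proposition \ref{chimp} together with the observations that $p(1) = 0$ forces $\sum_{k}a_k = 0$ (with $a_0 = 0$), and that $\frac{1}{2\pi}\oint_{\mathbb{S}^1}p(e^{i\theta})\,d\theta = a_0 = 0$, so both sides of the claimed identity are $o(h^{-(d-1)})$. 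I would write this out in three or four lines, citing Proposition \ref{chimp} for the monomial estimates and invoking linearity of the trace.
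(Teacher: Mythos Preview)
Your proposal is correct and follows exactly the paper's approach: write $p$ as a linear combination of the functions $z\mapsto z^k-1$ (possible precisely because $p(1)=0$), apply Proposition~\ref{chimp} to each term, and sum the finitely many $o(h^{-(d-1)})$ errors. The paper's proof is literally that one sentence; your detour through the ill-defined quantities $\Tr(S_h^k)$ and the subsequent self-correction are unnecessary and should be excised---start directly from the identity $p(S_h)=\sum_{k\neq 0}a_k(S_h^k-Id)$.
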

\begin{proof}
Every trigonometric polynomial vanishing at 1 may be written as a linear combination of
polynomials of the form $p(z)=z^k-1$, with $k\in \mathbb{Z}$, for which we
have proved the result in Proposition \ref{chimp}.
\end{proof}
\section{Proof of Theorem \ref{theo}}\label{fin}
Let us define, for any $\alpha>0$,
\begin{equation*}
C^0_{\alpha}(\mathbb{S}^1)= \{f\in C^0(\mathbb{S}^1;\mathbb{C}); f(z)\big{|}\log |z-1|\big{|}^{\alpha} \text{ is continuous } \}.
\end{equation*}

\begin{equation*}
\|f\|_{\alpha} = \sup_{|z|=1, z\neq 1} \big{|}\log |z-1|\big{|}^{\alpha} |f(z)|~~ \text{ for } f\in C^0_{\alpha}(\mathbb{S}^1).
\end{equation*}

Note that $C^0_{\alpha}\subset C^0_{\alpha'}$ if $\alpha>\alpha'$.
We will now prove the following theorem, which is a slightly refined version of Theorem \ref{theo}.

\begin{theoreme} \label{theo2}
Suppose that the manifold $(X,g)$ and the potential $V$ are such that the Hypotheses \ref{Trap} and \ref{Div} are satisfied. Let $\alpha>d$ and 
let $f\in C^0_{\alpha}(\mathbb{S}^1)$. Then we have
\begin{equation*}
\lim\limits_{h\rightarrow 0}\langle \mu_{h},f \rangle = \frac{\Vol(\mathcal{I})}{2\pi}\int_0^{2\pi} f(e^{i\theta})d\theta .
\end{equation*}
\end{theoreme}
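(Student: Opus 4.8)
The plan is to deduce Theorem \ref{theo2} from Corollary \ref{brebis} by a density/approximation argument, carefully tracking the logarithmic singularity at $z=1$. The measure $\mu_h$ is infinite, so one cannot simply approximate $f$ uniformly by trigonometric polynomials; the point of the weighted space $C^0_\alpha(\mathbb{S}^1)$ is to control the behaviour of $f$ near $1$ well enough that the contribution of eigenvalues $e^{i\beta_{h,n}}$ clustering near $1$ is negligible. So the first step is to obtain an a priori bound on how many phase shifts $\beta_{h,n}$ lie within distance $\delta$ of $0$ modulo $2\pi$, i.e. a bound of the form $(2\pi h)^{d-1}\#\{n : |e^{i\beta_{h,n}}-1|\le \delta\} = O(|\log\delta|^{-\alpha+\text{something}})$ or, more realistically, a counting bound like $(2\pi h)^{d-1} N_h(\text{near }1) \lesssim$ a fixed function of $\delta$ tending to $0$. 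This is where one uses that $S_h - \mathrm{Id}$ is trace class together with quantitative estimates; concretely one tests $\Tr\,p(S_h)$ against a nonnegative trigonometric polynomial $p$ that vanishes at $1$ to order as high as one wishes and is bounded below by $c\,|z-1|^{2}$ away from $1$, or iterates this with higher-degree polynomials to kill the $|\log\delta|$ powers. This a priori estimate — with the right power of $|\log\delta|$, which is precisely why the hypothesis $\alpha>d$ appears — is the main obstacle.

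Granting such an a priori bound, the argument proceeds as follows. Fix $f\in C^0_\alpha(\mathbb{S}^1)$ and $\varepsilon>0$. Split $f = f\cdot\mathbf{1}_{|z-1|\ge\delta} + f\cdot\mathbf{1}_{|z-1|<\delta}$; smooth this cutoff so that we write $f = f_\delta + g_\delta$ with $f_\delta$ continuous, supported away from $1$, and $g_\delta$ supported in $\{|z-1|\le 2\delta\}$ with $\|g_\delta\|_\infty \le \|f\|_\alpha\,|\log\delta|^{-\alpha}$. On the one hand, $f_\delta$ is a continuous function vanishing near $1$, so by Stone--Weierstrass we may approximate it uniformly by trigonometric polynomials $p$; since $f_\delta$ (and $p$) vanish near $1$, $\langle\mu_h, p\rangle$ is governed by Corollary \ref{brebis}, and the uniform error is controlled because $\mu_h$ restricted to functions supported away from $1$ has total mass $O(1)$ (this itself follows from $\Tr(|S_h-\mathrm{Id}|)$-type bounds, or again from testing against a fixed polynomial). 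On the other hand, the contribution of $g_\delta$ is bounded by $\|g_\delta\|_\infty$ times $(2\pi h)^{d-1}\#\{n: |e^{i\beta_{h,n}}-1|\le 2\delta\}$, which by the a priori estimate is at most $\|f\|_\alpha\,|\log\delta|^{-\alpha}\cdot C|\log\delta|^{\beta}$ for some fixed power $\beta\le d$ coming from the counting bound, hence tends to $0$ as $\delta\to 0$ uniformly in $h$.

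Putting these together: given $\varepsilon$, first choose $\delta$ small so that the $g_\delta$-contribution and the error in $\int f_\delta\,d\theta$ versus $\int f\,d\theta$ are each below $\varepsilon$; then choose a trigonometric polynomial $p$ with $\|f_\delta - p\|_\infty$ small enough that $|\langle\mu_h,f_\delta\rangle - \langle\mu_h,p\rangle|\le\varepsilon$ uniformly in $h$; then apply Corollary \ref{brebis} to $p-\text{(its value at }1)$, noting $p(1)$ is small, to get $\langle\mu_h,p\rangle \to \frac{\Vol(\mathcal{I})}{2\pi}\int_0^{2\pi}p(e^{i\theta})\,d\theta$ as $h\to0$. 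Letting $h\to0$ and then $\varepsilon\to0$ yields the claim. The only genuinely new ingredient beyond \cite{Equid} and the preceding sections is the a priori counting estimate near $z=1$ and its interplay with the weight $|\log|z-1||^\alpha$; everything else is bookkeeping. I expect the a priori estimate to be proved by choosing, for each large $N$, a trigonometric polynomial $p_N \ge 0$ on $\mathbb{S}^1$ with $p_N(1)=0$, $p_N(z)\gtrsim 1$ for $|z-1|\ge\delta$ — no, rather the reverse: $p_N$ vanishing to high order at $1$ so that $p_N(z) \le C_N |z-1|^{2N}$ near $1$ but $p_N \gtrsim$ a fixed bump on an annulus — and comparing $\sum_n p_N(e^{i\beta_{h,n}})$, controlled by Corollary \ref{brebis}, against the count; a Chebyshev-type argument then converts this into the desired decay, with the surviving logarithmic power dictating the threshold $\alpha>d$.
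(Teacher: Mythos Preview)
Your overall strategy---pass from Corollary \ref{brebis} to general $f$ by approximation, with an a priori bound handling the accumulation of eigenvalues at $1$---is the right one, but there is a genuine gap in how you treat the tail. The set $\{n:|e^{i\beta_{h,n}}-1|\le 2\delta\}$ is \emph{infinite} for every $\delta>0$ and every $h>0$, since the eigenvalues of $S_h$ accumulate at $1$; so bounding $\langle\mu_h,g_\delta\rangle$ by ``$\|g_\delta\|_\infty$ times the count near $1$'' gives $\|g_\delta\|_\infty\cdot\infty$, which is useless. The inequality you need runs the other way: one must control $\#\{n:|e^{i\beta_{h,n}}-1|\ge\delta\}$, and the bound has to be \emph{polylogarithmic} in $\delta$ (in fact $O\bigl(h^{-(d-1)}|\log\delta|^{d-1}\bigr)$) so that, after multiplying by the weight $|\log|z-1||^{-\alpha}$, the dyadic sum over annuli $e^{-(k+1)/h}\le|z-1|<e^{-k/h}$ converges precisely when $\alpha>d$. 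Your Chebyshev proposal---testing $\sum_n|e^{i\beta_{h,n}}-1|^{2N}$ against Corollary \ref{brebis}---only produces $(2\pi h)^{d-1}\#\{n:|e^{i\beta_{h,n}}-1|\ge\delta\}\le C_N\delta^{-2N}$, which is polynomial in $1/\delta$ and far too weak; moreover the $o(1)$ error in the Corollary is not uniform in the polynomial, so you cannot let $N\to\infty$.

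The paper gets the counting bound from an entirely different source: it imports the pointwise decay estimate $|e^{i\beta_{h,n}}-1|\le Ch^{-d}\exp\bigl(C/h - n^{1/(d-1)}/C\bigr)$ from \cite{christiansen2015sharp} (based on \cite{zworski1989sharp}), which immediately gives $\#\{n:|e^{i\beta_{h,n}}-1|\ge e^{-L/h}\}\le C_0(L/h)^{d-1}$ (Lemma \ref{paranoid2}). A dyadic decomposition then yields the uniform functional bound $|\langle\mu_h,f\rangle|\le C\|f\|_\alpha$ for all $\alpha>d$ (Lemma \ref{chevre}). With this in hand, the paper approximates $f$ \emph{in the $C^0_{\alpha'}$ norm} (for some $d<\alpha'<\alpha$) by trigonometric polynomials vanishing at $1$, via a two-step construction forcing the approximants to vanish to second order at $z=1$. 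Note that your proposed uniform approximation of $f_\delta$ by a trigonometric polynomial $p$ runs into the same infinity problem: near $z=1$ the difference $f_\delta-p$ equals $-p$, and $\langle\mu_h,p(1)\rangle$ (or any nonzero constant) is infinite, so the approximation must be carried out in the weighted norm rather than in $C^0$.
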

Before writing the proof, let us state two technical lemmas. Recall that we denote the eigenvalues of $S_h$ by $e^{i\beta_{n,h}}$. We shall from now on take the convention that $|e^{i\beta_{h,n}}-1|\geq |e^{i\beta_{h,n+1}}-1|$.

For any $L\geq1$, we shall denote by $N_{L,h}$ the number of $n\in \mathbb{N}$ such that $|e^{i\beta_{h,n}}-1|\geq e^{-L/h}$.

\begin{lemme}\label{paranoid2}
There exists $C_0>0$ such that for any $L\geq1$ and $0<h<1$, we have $N_{L,h}\leq C_0 \Big{(}\frac{L}{h}\Big{)}^{d-1}$
\end{lemme}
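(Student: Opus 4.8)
The plan is to bound $N_{L,h}$ by relating the number of eigenvalues of $S_h$ far from $1$ to the trace of a suitable power of $S_h - Id$, and then to use the Fourier integral operator structure to control that trace. First I would observe that if $e^{i\beta}$ is an eigenvalue with $|e^{i\beta} - 1| \geq e^{-L/h}$, then $|e^{ik\beta} - 1| \geq c\, k\, e^{-L/h}$ for small $k$ (or more simply, just work with $k=1$): the eigenvalues of $S_h - Id$ are $e^{i\beta_{h,n}} - 1$, so $N_{L,h}$ counts those with modulus at least $e^{-L/h}$. Since $S_h - Id$ is trace class and even Hilbert–Schmidt, a crude approach is $N_{L,h}\, e^{-2L/h} \leq \sum_n |e^{i\beta_{h,n}}-1|^2 = \|S_h - Id\|_{\mathrm{HS}}^2 = \Tr\big((S_h-Id)(S_h^*-Id)\big) = \Tr(2\,Id - S_h - S_h^{-1})= -\Tr(S_h - Id) - \Tr(S_h^{-1}-Id)$, which by Proposition \ref{chimp} (with $k=1$ and $k=-1$) is $O(h^{-(d-1)})$. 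This would only give $N_{L,h} = O(h^{-(d-1)} e^{2L/h})$, which is far too weak — the exponential blow-up in $L/h$ is the wrong shape. So the crude Hilbert–Schmidt bound is not enough; I expect the real argument to avoid squaring the smallness $e^{-L/h}$.

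The better route, and the one I would pursue, is to go back to the integral-kernel representation (\ref{ampliscatt}) together with the resolvent estimate (\ref{burq}). These show that the kernel $a_1(\omega,\omega';h)$ of $S_h - Id$ is, up to the constant $c(d,h)$ which is polynomial in $h$, of the form $\int_{\mathbb{R}^d} e^{i\langle\omega,x\rangle/h} f_1(x,\omega';h)\,dx$ with $f_1$ smooth, supported in $x$ in a fixed compact set $B(0,R)$, and bounded polynomially in $h$. Consequently $\|S_h - Id\|_{\mathrm{HS}}^2 = \|a_1\|_{L^2(\mathbb{S}^{d-1}\times\mathbb{S}^{d-1})}^2 = O(h^{-(d-1)})$ — but more importantly, one gets that the singular values $s_n = |e^{i\beta_{h,n}}-1|$ decay: I would use the smoothness of $a_1$ in $\omega$ (it is in fact a bandlimited-type object because of the oscillatory factor $e^{i\langle\omega,x\rangle/h}$ with $|x|\le R$) to show, via Lemma \ref{harmononstat} applied as in the proof of Lemma \ref{Out}, that expanding $S_h - Id$ in the spherical-harmonic basis $\phi_\ell^m$ gives matrix entries that are $O((R/(h\ell^2))^\infty)$ once $\ell \gtrsim R'/h$. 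Hence $S_h - Id$ is, modulo an $O(h^\infty)$ error in operator norm, a finite-rank operator whose range is spanned by $\{\phi_\ell^m : \ell \lesssim R'/h\}$. The dimension of that span is $\sum_{\ell \lesssim R'/h} d_\ell = \sum_{\ell \lesssim R'/h} O(\ell^{d-2}) = O((R'/h)^{d-1})$, which gives the free bound $N_{h} \lesssim h^{-(d-1)}$ on the total number of non-negligible eigenvalues — but with no $L$ dependence, hence not the claimed $(L/h)^{d-1}$ and in particular not useful when $L$ is small.

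So the actual content must be the opposite inequality: for small $L$, few eigenvalues are as large as $e^{-L/h}$. Here I would argue by contradiction with a trace bound. Suppose $N_{L,h} \geq C_0 (L/h)^{d-1}$ with $C_0$ to be chosen large. For each such eigenvalue $|e^{i\beta_{h,n}} - 1| \geq e^{-L/h}$, hence $|\beta_{h,n}| \gtrsim e^{-L/h}$ (mod $2\pi$), so $|e^{ik\beta_{h,n}} - 1| \gtrsim \min(1, k e^{-L/h})$. Taking $k = k(h) := \lfloor e^{L/h}\rfloor$ we get $|e^{ik\beta_{h,n}} - 1| \gtrsim c > 0$ for all $n \leq N_{L,h}$, so that $\Tr\big((Id - S_h^{k})(Id - S_h^{-k})\big) = \sum_n |e^{ik\beta_{h,n}} - 1|^2 \geq c^2 N_{L,h}$. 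On the other hand, by the same kernel argument (\ref{ampliscatt})–(\ref{burq}) applied to $S_h^k - Id$ — whose kernel $a_k$ is again of the form $\int e^{i\langle\omega,x\rangle/h} f_k(x,\omega';h)\,dx$ with $f_k$ supported in $x$ in a fixed compact set and bounded \emph{polynomially in $h$} uniformly for $k$ up to, say, $e^{2L/h}$ (this uniformity is the point one must check: each extra power of $S_h$ contributes one more resolvent factor, costing $h^{-1}$, and one more cut-off commutator, but $k\le e^{2L/h}$ powers contribute a factor $(Ch^{-1})^k$, which is dwarfed by... no) — here is the main obstacle. The naive bound $\|S_h^k - Id\|_{\mathrm{HS}} \le k\,\|S_h - Id\|_{\mathrm{HS}} = O(k\, h^{-(d-1)/2})$ gives $\Tr((Id-S_h^k)(Id-S_h^{-k})) = O(k^2 h^{-(d-1)}) = O(e^{2L/h} h^{-(d-1)})$, again the wrong shape. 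The correct observation must instead be a \textbf{localization in the spherical-harmonic basis that is uniform in $k$}: the kernel of $S_h^k - Id$ only "sees" frequencies $\ell \lesssim R'/h$ in the $\omega$-variable, because the oscillatory factor $e^{i\langle\omega,x\rangle/h}$ with $|x|\leq R$ is unchanged by composition — the compact set in which $x$ lives does not grow with $k$ (trajectories enter and leave the interaction region $X_0 \subset B(0,T_0)$, and it is the endpoints that matter for the kernel representation, not the sojourn). Granting that, $S_h^k - Id$ is, modulo $O(h^\infty)$, supported on a space of dimension $O((R'/h)^{d-1})$, so $\mathrm{rank}^+(Id - S_h^k) \leq C_1 h^{-(d-1)}$ up to $O(h^\infty)$, whence $N_{L,h} \leq c^{-2}\Tr((Id-S_h^k)(Id-S_h^{-k})) + O(h^\infty) \leq C_2 h^{-(d-1)} + O(h^\infty)$. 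This is still only $N_{L,h} = O(h^{-(d-1)})$, not $O((L/h)^{d-1})$ — but since $L \geq 1$, we have $(L/h)^{d-1} \geq h^{-(d-1)}$, so the bound $N_{L,h} \le C_0 h^{-(d-1)} \le C_0 (L/h)^{d-1}$ is exactly the claim. In other words, the lemma as stated is implied by the much simpler estimate $N_{L,h} \leq C_0 h^{-(d-1)}$ \emph{for all} $L\geq 1$, and the explicit $L$-dependence in the statement is only there because it is what is needed later (Theorem \ref{theo2} will need the $L/h$ scaling when $L$ is taken $h$-dependent or large).

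Therefore the clean plan is: (1) use (\ref{ampliscatt}) and (\ref{burq}) to write the kernel of $S_h - Id$ as $\int e^{i\langle\omega,x\rangle/h} f_1(x,\omega';h)\,dx$ with $f_1$ smooth, supported in $x\in B(0,R)$, polynomially bounded in $h$; (2) apply Lemma \ref{harmononstat} exactly as in the proof of Lemma \ref{Out} to conclude $\langle \phi_\ell^m, (S_h - Id)\phi_{\ell'}^{m'}\rangle = O((R/(h\min(\ell,\ell')^2))^\infty)$ when $\max(\ell,\ell') \gtrsim R'/h$, so that the compression of $S_h - Id$ to the orthogonal complement of $\mathrm{span}\{\phi_\ell^m : \ell < R'/h\}$ has operator norm $O(h^\infty)$; (3) conclude that the number of singular values of $S_h - Id$ exceeding $e^{-L/h}$ (indeed exceeding any threshold larger than a fixed power of $h$ times $h^\infty$) is at most $\dim \mathrm{span}\{\phi_\ell^m : \ell < R'/h\} + (\text{correction for the }O(h^\infty)\text{ tail}) = \sum_{\ell < R'/h} d_\ell + O(1)= O((R'/h)^{d-1})$; (4) since $L \geq 1$ forces $(L/h)^{d-1} \geq h^{-(d-1)}$, absorb the constant and set $C_0$ accordingly. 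I expect step (3) — converting the operator-norm smallness of a compression into a \emph{counting} bound on singular values above a threshold — to be the one requiring a little care: one uses the min-max characterization of singular values, namely $s_{N+1}(S_h - Id) \leq \|(Id - \Pi_N)(S_h-Id)\|$ where $\Pi_N$ is the orthogonal projection onto $\mathrm{span}\{\phi_\ell^m : \ell < R'/h\}$ and $N = \dim(\mathrm{ran}\,\Pi_N)$, so that $s_{N+1} = O(h^\infty) < e^{-L/h}$ for $h$ small (since $e^{-L/h} \geq e^{-1/h^2}\cdot(\ldots)$... one must be mildly careful that $O(h^\infty)$ genuinely beats $e^{-L/h}$, which holds because $L\ge 1$ is fixed-or-bounded-below while $h^\infty$ decays faster than any polynomial but we only need it to beat $e^{-L/h}$ for fixed $L$ — true for $h$ small, and for $h$ not small the conclusion $N_{L,h}\le C_0(L/h)^{d-1}$ is trivial by taking $C_0$ large since both sides are bounded on $h\in[\delta,1]$); hence $N_{L,h} \leq N$.
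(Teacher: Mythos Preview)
Your argument has a genuine gap at the step where you claim that $O(h^\infty)$ ``beats'' $e^{-L/h}$. The direction is backwards: for every fixed $M$ one has $h^M \gg e^{-1/h}\ge e^{-L/h}$ as $h\to 0$, so the bound $s_{N+1}(S_h-Id)=O(h^\infty)$ does \emph{not} imply $s_{N+1}<e^{-L/h}$. Exponential decay in $1/h$ is strictly faster than any polynomial decay, and the threshold in this lemma is exponential. Hence your min--max step (3) does not give $N_{L,h}\le N$, and the intermediate claim $N_{L,h}\le C_0 h^{-(d-1)}$ \emph{uniformly in $L\ge1$} is actually false: for each fixed $h$ there are infinitely many eigenvalues $e^{i\beta_{h,n}}\ne 1$ accumulating at $1$, so $N_{L,h}\to\infty$ as $L\to\infty$. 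The $L$-dependence on the right-hand side of the lemma is essential (it is used in Lemma~\ref{chevre} with $L=k$ running over all integers), not cosmetic.

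The paper's proof is on a different scale altogether. It imports from \cite{christiansen2015sharp} (building on \cite{zworski1989sharp}) the pointwise eigenvalue bound
\[
|e^{i\beta_{h,n}}-1|\le \frac{C}{h^d}\exp\Big(\frac{C}{h}-\frac{n^{1/(d-1)}}{C}\Big),
\]
which already contains genuine exponential decay in $n^{1/(d-1)}$. One then takes the product over $n=1,\dots,N_{L,h}$, compares with the trivial lower bound $\prod_{n\le N_{L,h}}|e^{i\beta_{h,n}}-1|\ge e^{-LN_{L,h}/h}$, takes logarithms, and solves the resulting inequality for $N_{L,h}$. The point is that this input is exponentially small in $n^{1/(d-1)}$, matching the $e^{-L/h}$ threshold; the spherical-harmonic/FIO-kernel arguments you use (Lemma~\ref{harmononstat}, representation (\ref{ampliscatt}), resolvent bound (\ref{burq})) only ever produce $O(h^\infty)$ control, which lives on the wrong scale for this lemma.
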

\begin{proof}
Thanks to equation (2.3) in \cite{christiansen2015sharp}  (which relies on the methods developed in \cite{zworski1989sharp}), we have that there exists $C>0$ independent of $h$ and $n$ such that
\begin{equation}\label{lieneigensingular}
|e^{i\beta_{h,n}}-1|\leq \frac{C}{h^d} \exp \Big{(} \frac{C}{h}- \frac{n^{1/(d-1)}}{C}\Big{)}.
\end{equation}

In particular, we have that for any $N\geq 1$,
\begin{equation*}
\begin{aligned}
\prod_{n=1}^N |e^{i\beta_{h,n}}-1| &\leq \Big{(}\frac{C}{h^d}\Big{)}^N \exp \Big{(} \frac{N C}{h}- \frac{1}{C} \sum_{n=1}^N n^{1/(d-1)}\Big{)}\\
&\leq \Big{(}\frac{C}{h^d}\Big{)}^N \exp \Big{(} \frac{N C}{h}- C' N^{d/(d-1)}\Big{)},
\end{aligned}
\end{equation*}
for some $C'>0$ independent of $h,N$.

Therefore, we have that
\begin{equation*}
\begin{aligned}
e^{-\frac{LN_{L,h}}{h}}&\leq \prod_{n=1}^{N_{L,h}} |e^{i\beta_{h,n}}-1|\\
&\leq \Big{(}\frac{C}{h^d}\Big{)}^{N_{L,h}} \exp \Big{(} \frac{{N_{L,h}} C}{h}- C' {N_{L,h}}^{d/(d-1)}\Big{)}.
\end{aligned}
\end{equation*}

By taking logarithms, we get
\begin{equation*}
\begin{aligned}
-\frac{LN_{L,h}}{h}
&\leq {N_{L,h}}\log\Big{(}\frac{C}{h^d}\Big{)} +  \frac{{N_{L,h}} C}{h}- C' {N_{L,h}}^{d/(d-1)}.
\end{aligned}
\end{equation*}
The first term in the right hand side is negligible, so we get, by possibly changing slightly the constant $C'$,
\begin{equation*}
\begin{aligned}
C' {N_{L,h}}^{d/(d-1)}
&\leq \frac{{N_{L,h}} (C+L)}{h}.
\end{aligned}
\end{equation*}
Therefore, $N_{L,h} \leq \Big{(}\frac{C+L}{C'h}\Big{)}^{d-1}\leq C_0 (L/h)^{d-1}$ for some $C_0>0$ large enough, but independent of $L$ and $h$, which concludes the proof of the lemma.
\end{proof}
\begin{lemme}\label{chevre}
For any $\alpha>d$, there exists $C_\alpha\bel 0$ such that for any $f\in C^0_{\alpha}(\mathbb{S}^1)$, we have
\begin{equation*}
|\langle\mu_{h},f\rangle| \leq C \|f\|_{\alpha}
\end{equation*}
\end{lemme}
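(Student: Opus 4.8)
The plan is to estimate $\langle\mu_h,f\rangle$ directly from the definition, bounding each eigenvalue contribution by $\|f\|_\alpha$ and using Lemma~\ref{paranoid2} to quantify, for a fixed index $n$, how close $e^{i\beta_{h,n}}$ must be to $1$. Since $\langle\mu_h,f\rangle=(2\pi h)^{d-1}\sum_n f(e^{i\beta_{h,n}})$, the triangle inequality gives $|\langle\mu_h,f\rangle|\le(2\pi h)^{d-1}\sum_n|f(e^{i\beta_{h,n}})|$; and from $f\in C^0_\alpha(\mathbb{S}^1)$ we have the pointwise bound $|f(z)|\le\|f\|_\alpha\big|\log|z-1|\big|^{-\alpha}$ for $|z-1|\ne1$, together with the trivial bound $|f(z)|\le\|f\|_{C^0(\mathbb{S}^1)}$. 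So everything reduces to a dyadic-type estimate on the indices.

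First I would invert Lemma~\ref{paranoid2}. Recall the convention that $|e^{i\beta_{h,n}}-1|$ is nonincreasing in $n$; then $N_{L,h}\le C_0(L/h)^{d-1}$ says exactly that $n>C_0(L/h)^{d-1}$ forces $|e^{i\beta_{h,n}}-1|<e^{-L/h}$. Taking $L$ just below $h(n/C_0)^{1/(d-1)}$ — which is admissible, i.e.\ $L\ge1$, precisely once $n\gtrsim h^{-(d-1)}$ — one obtains, for all such $n$,
\[
|e^{i\beta_{h,n}}-1|\le e^{-(n/C_0)^{1/(d-1)}},\qquad\text{hence}\qquad\big|\log|e^{i\beta_{h,n}}-1|\big|\ge(n/C_0)^{1/(d-1)}.
\]

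Then I would split the sum at $n_h:=\lceil C_0 h^{-(d-1)}\rceil$. For $n\ge n_h$ the displayed facts give $|f(e^{i\beta_{h,n}})|\le\|f\|_\alpha\, C_0^{\alpha/(d-1)}n^{-\alpha/(d-1)}$; since $\alpha>d$ we have $\alpha/(d-1)>1$, so $\sum_{n\ge n_h}n^{-\alpha/(d-1)}\lesssim n_h^{1-\alpha/(d-1)}\asymp h^{\alpha-(d-1)}$, and multiplying by $(2\pi h)^{d-1}$ this tail contributes $\lesssim\|f\|_\alpha\, h^{\alpha}\le\|f\|_\alpha$. For $n<n_h$ there are at most $C_0h^{-(d-1)}$ indices, each contributing at most $\|f\|_{C^0(\mathbb{S}^1)}$, hence (after the prefactor) at most $C\|f\|_{C^0(\mathbb{S}^1)}\le C_\alpha\|f\|_\alpha$. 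Adding the two pieces gives $|\langle\mu_h,f\rangle|\le C_\alpha\|f\|_\alpha$.

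The delicate step, and the one I expect to be the main obstacle, is the inversion of Lemma~\ref{paranoid2}: it only bounds the number of eigenvalues lying at a given distance from $1$, and one has to turn this into the pointwise decay $\big|\log|e^{i\beta_{h,n}}-1|\big|\gtrsim(n/C_0)^{1/(d-1)}$, which is available only in the range $n\gtrsim h^{-(d-1)}$. One then has to check that the exponent $\alpha/(d-1)$ (where the hypothesis $\alpha>d$ enters) is large enough that $\sum n^{-\alpha/(d-1)}$, summed from $n\sim h^{-(d-1)}$, not only converges but leaves a positive power of $h$ to spare, so that the factor $(2\pi h)^{d-1}$ keeps the tail bounded uniformly in $h$; in the complementary range the cardinality bound $O(h^{-(d-1)})$ exactly matches the prefactor, and only the crude bound on $f$ is needed there.
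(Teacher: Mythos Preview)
Your proof is correct and follows essentially the same strategy as the paper's: split off the $O(h^{-(d-1)})$ eigenvalues farthest from $1$ and bound them by $\|f\|_{C^0}$, then control the tail via Lemma~\ref{paranoid2} together with $|f(z)|\le\|f\|_\alpha\,|\log|z-1||^{-\alpha}$. The only cosmetic difference is that the paper groups the tail into dyadic shells $e^{-(k+1)/h}\le|e^{i\beta_{h,n}}-1|<e^{-k/h}$ and sums over $k$, whereas you invert the counting function to obtain the per-index decay $|\log|e^{i\beta_{h,n}}-1||\ge(n/C_0)^{1/(d-1)}$ and sum over $n$; the two computations are equivalent and yield the same $O(h^\alpha\|f\|_\alpha)$ tail bound.
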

\begin{proof}
We have
\begin{equation}\label{mouto}
\begin{aligned}
|\langle\mu_{h},f \rangle| &= (2\pi h)^{d-1} \Big{|}\sum_{n\in
\mathbb{N}} f(e^{i\beta_{h,n}})\Big{|}\\
&\leq (2\pi h)^{d-1} \sum_{|e^{i\beta_{h,n}}-1|\geq e^{-1/h}} |f(e^{i\beta_{h,n}})| + (2\pi h)^{d-1} \sum_{|e^{i\beta_{h,n}}-1|< e^{-1/h}} |f(e^{i\beta_{h,n}})|.
\end{aligned}
\end{equation}

Let us consider the first sum. By Lemma \ref{paranoid2}, it has at most $C_0 h^{-(d-1)}$ terms. Hence, it is bounded by
\begin{equation}
(2\pi h)^{d-1} \sum_{|e^{i\beta_{h,n}}-1|\geq e^{-1/h}} |f(e^{i\beta_{h,n}})|\leq (2\pi h)^{d-1} C_0 h^{-{(d-1)} }\|f\|_{C^0}\leq C \|f\|_{\alpha},
\end{equation}
for some $C>0$.
Let us now consider the second term in (\ref{mouto}). For each $k\geq 1$, we denote by $\sigma_{k,h}$ the set of $n\in \mathbb{N}$ such that $e^{-(k+1)/h}\leq |e^{i\beta_{h,n}}-1|< e^{-k/h}$. By Lemma \ref{paranoid2}, $\sigma_{k,h}$ contains at most $C_0 \Big{(} \frac{k+1}{h}\Big{)}^{d-1}$ elements. On the other hand,
for each $n\in \sigma_{k,h}$, we have 
$$|f(e^{i\beta_{h,n}})| \leq \|f\|_\alpha \big{|}\log (e^{-k/h})\big{|}^{-\alpha}=\frac{h^{\alpha}\|f\|_\alpha}{k^{\alpha}} .$$
 Therefore, we have
\begin{equation*}
\begin{aligned}
(2\pi h)^{d-1}\sum_{|e^{i\beta_{h,n}}-1|< e^{-1/h}} |f(e^{i\beta_{h,n}})| 
&= (2\pi h)^{d-1}\sum_{k=1}^{+\infty} \sum_{n\in \sigma_{k,h}} |f(e^{i\beta_{h,n}})| \\
&\leq  (2\pi h)^{d-1}\sum_{k=1}^{+\infty} C_0 \Big{(} \frac{k+1}{h}\Big{)}^{d-1} \frac{h^{\alpha}\|f\|_\alpha}{k^{\alpha}}\\
&\leq C h^\alpha \|f\|_\alpha,
\end{aligned}
\end{equation*}
for some $C$ independent of $h$.
 This concludes the proof of the lemma.
\end{proof}
\begin{proof}[Proof of Theorem \ref{theo2}]
We have proved the result for all trigonometric polynomials vanishing at 1 in Corollary \ref{brebis}. 
Let $\alpha>\alpha'>d$, and let $f\in C^0_\alpha\subset C^0_{\alpha'}$. Let us show that $f$ can be approximated by trigonometric polynomials vanishing at 1 in the $C^0_{\alpha'}$ norm, which will conclude the proof of the theorem thanks to Lemma \ref{chevre}.

Since $f (z) \big{(}1+\big{|}\log |z-1|\big{|}^{2\alpha}\big{)}^{1/2}$ is continuous, we may find a sequence $P_n$ of polynomials such that $$\big{\|}P_n - f (z) \big{(}1+\big{|}\log |z-1|\big{|}^{2\alpha}\big{)}^{1/2}\big{\|}_{C^0}\leq 1/n.$$ 
Since $f(0)=0$, we may suppose that $P_n(1)=0$. We may also suppose that $P_n'(1)=0$ (for a proof of this fact, see for example \cite[Theorem 8, \S 6]{duren2012invitation}).

Since the function $\big{|}\log |z-1|\big{|}^{\alpha'}\big{(}1+\big{|}\log |z-1|\big{|}^{2\alpha}\big{)}^{-1/2}$ is continuous, we have that 
\begin{equation}\label{approche}
\big{\|}P_n \big{|}\log |z-1|\big{|}^{\alpha'}\big{(}1+\big{|}\log |z-1|\big{|}^{2\alpha}\big{)}^{-1/2} - f (z) \big{|}\log |z-1|\big{|}^{\alpha'}\big{\|}_{C^0}\leq C/n.
\end{equation}

Now, since $P_n(1)=P_n'(1)=0$, the function $P_n /\big{(}(z-1)\big{(}1+\big{|}\log |z-1|\big{|}^{2\alpha}\big{)}^{1/2}\big{)} $ is continuous, and we may find a polynomial $Q_n$ such that 
$$ \Big{\|} \frac{P_n}{(z-1)\big{(}1+\big{|}\log |z-1|\big{|}^{2\alpha}\big{)}^{1/2}} - Q_n\Big{\|}_{C^0}\leq 1/n$$
Since the function $ (z-1)\big{|}\log |z-1|\big{|}^{\alpha'}$ is continuous, we obtain that
\begin{equation}\label{approche2} \Big{\|} P_n\big{|}\log |z-1|\big{|}^{\alpha'}\big{(}1+\big{|}\log |z-1|\big{|}^{2\alpha}\big{)}^{-1/2} - Q_n(z-1)\big{|}\log |z-1|\big{|}^{\alpha'}\Big{\|}_{C^0}\leq C'/n
\end{equation}

Combining (\ref{approche}) and (\ref{approche2}), we obtain that $f$ can be approached by $(z-1)Q_n$ in the $C^0_{\alpha'}$ norm.
This concludes the proof of Theorem \ref{theo2}.

\end{proof}
\bibliographystyle{alpha}
\bibliography{references}

\begin{thebibliography}{DGRHH14}

\bibitem[Ale05]{Alex}
I.~Alexandrova.
\newblock Structure of the semi-classical amplitude for general scattering
  relations.
\newblock {\em Comm. Partial Differential Equations}, 30(10-12):1505--1235,
  2005.

\bibitem[BP12]{EquidBP}
D.~Bulger and A.~Pushnitski.
\newblock The spectral density of the scattering matrix for high energies.
\newblock {\em Communications in Mathematical Physics}, 316, Issue 3.:693--704,
  2012.

\bibitem[BP13]{EquidMag}
D.~Bulger and A.~Pushnitski.
\newblock The spectral density of the scattering matrix of magnetic
  schrödinger operator for high energies.
\newblock {\em J. Spectr. Theory}, 3, Issue 4.:517--534, 2013.

\bibitem[Bur02]{BurqRes}
N.~Burq.
\newblock Lower bounds for shape resonances width of long range {S}chr\"odinger
  operators.
\newblock {\em Amer. J. Math.}, 124(4):677--735, 2002.

\bibitem[BY82]{birman1982asymptotic}
M.~Sh. Birman and D.~R. Yafaev.
\newblock Asymptotic behavior of the limiting phase shifts in the case of
  scattering by a potential without spherical symmetry.
\newblock {\em Theoretical and Mathematical Physics}, 51(1):344--350, 1982.

\bibitem[BY84]{birman1984asymptotic}
M.~Sh. Birman and D.~R. Yafaev.
\newblock Asymptotic behavior of the spectrum of the scattering matrix.
\newblock {\em Journal of Soviet Mathematics}, 25(1):793--814, 1984.

\bibitem[Chr15]{christiansen2015sharp}
T.J. Christiansen.
\newblock A sharp lower bound for a resonance-counting function in even
  dimensions.
\newblock {\em arXiv preprint arXiv:1510.04952}, 2015.

\bibitem[CV02]{cardoso2001uniform}
F.~Cardoso and G.~Vodev.
\newblock Uniform estimates of the resolvent of the laplace--beltrami operator
  on infinite volume riemannian manifolds with cusps. ii.
\newblock {\em Annales Henri Poincar{\'e}}, 3(4):673--691, 2002.

\bibitem[Dat14]{datchev2014quantitative}
K.~Datchev.
\newblock Quantitative limiting absorption principle in the semiclassical
  limit.
\newblock {\em Geometric and Functional Analysis}, 24(3):740--747, 2014.

\bibitem[DG14]{DG}
S.~Dyatlov and C.~Guillarmou.
\newblock Microlocal limits of plane waves and {E}isenstein functions.
\newblock {\em Ann. Sci. {\'E}c. Norm. Sup}, 47(2):371--448, 2014.

\bibitem[DGRHH14]{EquidSph}
K.~Datchev, J.~Gell-Redman, A.~Hassell, and P.~Humphries.
\newblock Approximation and equidistribution of phase shifts: spherical
  symmetry.
\newblock {\em Communications in Mathematical Physics}, 326, Issue 3.:209--236,
  2014.

\bibitem[DS92]{EquidObst}
E.~Doron and U.~Smilansky.
\newblock Semiclassical quantization of chaotic billiards: a scattering theory
  approach.
\newblock {\em Nonlinearity}, 5:1055--1084, 1992.

\bibitem[Dur12]{duren2012invitation}
P.~L. Duren.
\newblock {\em Invitation to classical analysis}, volume~17.
\newblock American Mathematical Soc., 2012.

\bibitem[DZ]{Resonances}
S.~Dyatlov and M.~Zworski.
\newblock {\em Mathematical theory of scattering resonances}.
\newblock Version 0.03, To appear.

\bibitem[GRH15]{EquidPol}
J.~Gell-Redman and A.~Hassell.
\newblock The distribution of phase shifts for semiclassical potentials with
  polynomial decay.
\newblock {\em arXiv preprint 1509.03468}, 2015.

\bibitem[GRHZ15]{Equid}
J.~Gell-Redman, A.~Hassell, and S.~Zelditch.
\newblock Equidistribution of phase shifts in semiclassical potential
  scattering.
\newblock {\em Journal of the London Mathematical Society}, 91(1):159--179,
  2015.

\bibitem[GS87]{gerard1987semiclassical}
C.~G{\'e}rard and J.~Sj{\"o}strand.
\newblock Semiclassical resonances generated by a closed trajectory of
  hyperbolic type.
\newblock {\em Communications in Mathematical Physics}, 108(3):391--421, 1987.

\bibitem[Gui77]{guillemin1977sojourn}
V.~Guillemin.
\newblock Sojourn times and asymptotic properties of the scattering matrix.
\newblock {\em Publications of the Research Institute for Mathematical
  Sciences}, 12(Supplement):69--88, 1977.

\bibitem[HW08]{HaWu}
A.~Hassell and J.~Wunsch.
\newblock The semiclassical resolvent and the propagator for non-trapping
  scattering metrics.
\newblock {\em Adv. Math.}, 217(2):586--682, 2008.

\bibitem[Mel95]{Mel}
R.B. Melrose.
\newblock {\em Geometric {S}cattering {T}heory}.
\newblock Cambridge University Press, 0995.

\bibitem[MO05]{MarOK}
J.~Marklof and S.~O'Keefe.
\newblock Weyl's law and quantum ergodicity for maps with divided phase space,
  with an appendix by {S}. {Z}elditch.
\newblock {\em Nonlinearity}, 18:277--304, 2005.

\bibitem[Nak14]{EquidNakamura}
S.~Nakamura.
\newblock Microlocal properties of scattering matrices.
\newblock {\em arXiv preprint 1407.8299}, 2014.

\bibitem[PZ01]{petkov2001semi}
V.~Petkov and M.~Zworski.
\newblock Semi-classical estimates on the scattering determinant.
\newblock {\em Annales Henri Poincar{\'e}}, 2(4):675--711, 2001.

\bibitem[Sha16]{shapiro2016semiclassical}
J.~Shapiro.
\newblock Semiclassical resolvent bounds in dimension two.
\newblock {\em arXiv preprint arXiv:1604.03852}, 2016.

\bibitem[SY85]{sobolev1985phase}
A.~V. Sobolev and D.~R. Yafaev.
\newblock Phase analysis in the problem of scattering by a radial potential.
\newblock {\em Zapiski Nauchnykh Seminarov POMI}, 147:155--178, 1985.

\bibitem[Zwo89]{zworski1989sharp}
M.~Zworski.
\newblock Sharp polynomial bounds on the number of scattering poles.
\newblock {\em Duke Math. J}, 59(2):311--323, 1989.

\bibitem[Zwo12]{Zworski_2012}
M.~Zworski.
\newblock {\em Semiclassical Analysis}.
\newblock AMS, 2012.

\bibitem[ZZ99]{zelditch1999spacing}
S.~Zelditch and M.~Zworski.
\newblock Spacing between phase shifts in a simple scattering problem.
\newblock {\em Communications in mathematical physics}, 204(3):709--729, 1999.

\end{thebibliography}
\end{document}